\newcommand{\mypara}[1]{\medskip \noindent {\bf #1}}
\newcommand{\algemb}{\textsc{TreeEmb}}
\newcommand{\algsep}{\textsc{PruneAndSeparate}}
\newcommand{\addtree}{\textsc{UpdateTree}}
\newcommand{\alglp}{\textsc{RoundLP}}
\renewcommand{\phi}{\varphi}
\title{From Directed Steiner Tree to Directed Polymatroid Steiner Tree in Planar Graphs}
\author{Chandra Chekuri \thanks{Dept. of Computer Science, Univ. of
    Illinois, Urbana-Champaign, Urbana, IL 61801. {\tt
      chekuri@illinois.edu}. Supported in part by NSF grants CCF-1907937 and CCF-2402667.}
\and
Rhea Jain\thanks{Dept. of Computer Science, Univ. of Illinois,
  Urbana-Champaign, Urbana, IL 61801. {\tt rheaj3@illinois.edu}. 
  Supported in part by NSF grants CCF-1907937 and CCF-2402667.}
\and 
Shubhang Kulkarni\thanks{Dept. of Computer Science, Univ. of Illinois,
  Urbana-Champaign, Urbana, IL 61801. {\tt smkulka2@illinois.edu}.}
\and 
Da Wei Zheng\thanks{Dept. of Computer Science, Univ. of Illinois,
  Urbana-Champaign, Urbana, IL 61801. {\tt dwzheng2@illinois.edu}.}
\and
Weihao Zhu\thanks{Dept. of Computer Science, Univ. of Illinois,
  Urbana-Champaign, Urbana, IL 61801. {\tt weihaoz3@illinois.edu}.}
}
\begin{document}

\maketitle

\begin{abstract}
 In the Directed Steiner Tree (DST) problem the input is a directed edge-weighted graph $G=(V,E)$, a root vertex $r$ and a set $S \subseteq V$ of $k$ terminals. The goal is to find a min-cost subgraph that connects $r$ to each of the terminals. DST admits an $O(\log^2 k/\log \log k)$-approximation in \emph{quasi-polynomial} time \cite{GrandoniLL22,GhugeN22}, and an $O(k^{\eps})$-approximation for any fixed $\eps > 0$ in polynomial-time \cite{Zelikovsky97,Charikaretal99}. Resolving the existence of a polynomial-time poly-logarithmic approximation is a major open problem in approximation algorithms. In a recent work, Friggstad and Mousavi \cite{FriggstadM23} obtained a simple and elegant polynomial-time $O(\log k)$-approximation for DST in \emph{planar} digraphs via Thorup's shortest path separator theorem \cite{Thorup04}. We build on their work and obtain several new results on DST and related problems.
 \begin{itemize}
     \item We develop a tree embedding technique for rooted problems in planar digraphs via an interpretation of the recursion in \cite{FriggstadM23}. Using this we obtain polynomial-time poly-logarithmic approximations  for Group Steiner Tree \cite{GargKR00}, Covering Steiner Tree \cite{KonjevodRS02} and the Polymatroid Steiner Tree \cite{Calinescu_Zelikovsky_2005} problems in planar digraphs. All these problems are hard to approximate to within a factor of $\Omega(\log^2 n/\log \log n)$ even in trees \cite{HalperinK03,GrandoniLL22}. 
     \item We prove that the natural cut-based LP relaxation for DST has an integrality gap of $O(\log^2 k)$ in planar digraphs. This is in contrast to general graphs where the integrality gap of this LP is known to be $\Omega(\sqrt{k})$  \cite{ZosinK02} and $\Omega(n^{\delta})$ for some fixed $\delta > 0$ \cite{LiL22}. 
    \item We combine the preceding results with density based arguments to obtain poly-logarithmic approximations for the multi-rooted versions of the problems in planar digraphs. For DST our result improves the $O(R + \log k)$ approximation of \cite{FriggstadM23} when $R= \omega(\log^2 k)$.
 \end{itemize}
\end{abstract}

\newpage
\setcounter{page}{1}

\section{Introduction}
\label{sec:intro}
We consider several rooted network design problems in \emph{directed} graphs and
develop new approximation algorithms and integrality gap results for them in 
planar digraphs. It is well-known that many problems in directed graphs are 
harder to approximate than their corresponding undirected graph versions. 
A canonical example, and the motivating problem for this paper, is
the Steiner Tree problem. The input is 
an undirected graph $G = (V, E)$ with edge costs
$c: E \to \R_{\geq 0}$, a root $r \in V$, and a set of terminals 
$S \subseteq V \setminus \{r\}$. The goal is to find 
a minimum cost subgraph of $G$ in which each terminal is connected to the root. 
Steiner Tree is NP-Hard and APX-Hard to approximate. There is a long and rich
history on approximation algorithms for this problem. The current best 
approximation ratio is  $\ln 4 + \eps$ \cite{ByrkaGRS13,GoemansORZ12}, 
and it is known that there is no approximation factor better than 
$\frac{96}{95}$ unless $\PTIME = \NP$ \cite{ChlebikC08}. 
Steiner Tree admits a PTAS in planar graphs \cite{BorradaileKM09}.
In this paper we consider the directed version of this problem. 
Given a directed graph $G=(V,E)$ and a vertex $r$
we use the term $r$-tree to denote a subgraph of $G$ that is a directed 
out-tree rooted at $r$; note that
all vertices in the $r$-tree are reachable from $r$ in $G$.

\mypara{Directed Steiner Tree (DST):} The input is a directed graph $G=(V,E)$ 
with non-negative edge costs $c(e)$, a root $r \in V$, and a set of 
\emph{terminals} $S \subseteq V \setminus \{r\}$. 
The goal is to find a min-cost $r$-tree that contains each terminal. 
We let $k := |S|$.

DST is a natural and fundamental network design problem. Its approximability 
has been a fascinating open problem. An easy observation shows
that DST generalizes Set Cover and hence does not admit a better than 
$(1-\eps)\log k$ approximation \cite{Feige98}.
Via a more sophisticated reduction, it is known to be hard to approximate to an 
$\Omega(\log^2 k/\log \log k)$-factor under plausible complexity assumptions 
\cite{GrandoniLL22}, and to slightly weaker $\Omega(\log^{2-\eps} k)$-factor 
unless $NP$ is contained in randomized quasi-poly time \cite{HalperinK03}.
There is a quasi-polynomial time $O(\log^2 k/\log \log k)$-approximation 
\cite{GrandoniLL22,GhugeN22,Charikaretal99}, and a 
polynomial time $O(k^{\eps})$-approximation for any $\eps > 0$ \cite{Zelikovsky97}. 
These results suggest that DST may admit a polynomial-time poly-logarithmic 
approximation. However, this has not been resolved
despite the first quasi-polynomial-time poly-logarithmic approximation
being described in 1997 \cite{Charikaretal99}. One reason is that the natural 
LP relaxation has been shown to have a polynomial-factor integrality gap of 
$\Omega(\sqrt{k})$ \cite{ZosinK02}, and more recently $\Omega(n^{\delta})$ 
for some fixed $\delta > 0$ \cite{LiL22}. 

In a recent work, Friggstad and Mousavi \cite{FriggstadM23} considered DST in 
\emph{planar} digraphs. They give a surprisingly simple and elegant algorithm 
which yields an $O(\log k)$ approximation in polynomial time.
Their algorithm is based on a divide-and-conquer approach building on Thorup's 
shortest path planar separator theorem \cite{Thorup04}.
Planar graphs are an important and useful class of graphs from a theoretical and 
practical point of view, and moreover several results on planar graphs have 
been extended with additional ideas to the larger class of minor-free families 
of graphs. Inspired by \cite{FriggstadM23}, we address approximation
algorithms in planar digraphs for several rooted network design problems that 
are closely related to DST. We formally define the problems below and
then discuss their relationship to DST. In all problems below, the input is a 
directed graph $G=((V,E), c)$, where $c: E \to \R_{\geq 0}$ denote edge costs, 
and a root $r$; the goal is to find a min-cost subgraph to satisfy some 
connectivity property from the root. 

\mypara{Directed Group Steiner Tree (DGST):} The input consists of 
$G=((V,E), c)$, $r$, and $k$ \emph{groups} of terminals 
$g_1, \dots, g_k \subseteq V \setminus \{r\}$. The goal is to find a minimum 
cost $r$-tree that contains a terminal from each group $g_i$. 

\mypara{Directed Covering Steiner Tree (DCST):} This is a generalization of 
DGST in which each of the groups $g_1, \dots, g_q \subseteq V \setminus \{r\}$ 
has an integer requirement $h_i \ge 1$, $i \in [q]$. The goal is to find a 
minimum cost $r$-tree that contains at least $h_i$ distinct terminals from each 
$g_i$.

\mypara{Directed Polymatroid Steiner Tree (DPST):} DPST generalizes the
aforementioned problems. In addition to $G$ and $r$, the input consists of
an integer valued normalized monotone submodular function (polymatroid) 
$f:2^V \rightarrow \mathbb{Z}_{\geq 0}$ (see Section~\ref{sec:prelim} for a 
formal definition). The goal is to find a minimum cost $r$-tree $T=(V_T,E_T)$ 
such that $f(V_T) = f(V)$. 

\medskip
It is not difficult to see that 
$\text{DST} \preceq \text{DGST} \preceq \text{DCST} \preceq \text{DPST}$
where we use $X \preceq Y$ to indicate that $X$ is a special case of $Y$. 
In general directed graphs it is also easy to see that DST and DGST are 
equivalent, though this reduction does not hold in planar 
graphs\footnote{Demaine et al.\ \cite{DemaineHK14} define planar
group Steiner tree in a restricted way where the groups correspond to the nodes 
of distinct faces of an embedded planar graph. There is a PTAS for this special 
case in undirected graphs \cite{BateniDHM16}, and in fact it is equivalent to 
DST in planar graphs.
However we only restrict the graph to be planar, and not the groups.}. Further, 
the known approximation ratios (and the main recursive greedy technique) for DST 
generalize to these problems \cite{GhugeN22,Calinescu_Zelikovsky_2005}.
In contrast, the situation is quite different in undirected graphs.
The undirected version of these problems,
namely Group Steiner Tree (GST) \cite{GargKR00}, Covering Steiner Tree (CST)
\cite{KonjevodRS02,GuptaS06} and Polymatroid Steiner Tree (PST) 
\cite{Calinescu_Zelikovsky_2005} have
been well-studied, and poly-logarithmic approximation ratios are known. 
We defer a detailed
discussion of the motivations and results on these problems, but we highlight 
one important connection. The known hardness of approximation for DST that we
mentioned earlier is due to the fact that it holds for the special case of 
GST in trees! Thus, the group covering requirement makes the problem(s) 
substantially harder even in undirected graphs where
Steiner tree has a simple constant factor approximation. We point out that the 
$O(\log k)$ approximation of \cite{FriggstadM23}  separates the approximability 
of DST and DGST in planar graphs since the latter is hard
to a factor of $\Omega(\log^2 k/\log \log k)$ in trees. The positive algorithmic 
result in \cite{FriggstadM23} naturally motivates the following questions.

\begin{itemize}
  \item \emph{Are there polynomial-time poly-logarithmic approximation 
  algorithms for DGST, DCST, and DPST in planar digraphs}?
  \item \emph{Is the integrality gap of the natural LP for DST and DGST and 
  DCST in planar digraphs poly-logarithmic}
  \footnote{It is not straightforward to formulate a relaxation for DPSP. 
  The other problems have known LP relaxations.}? 
\end{itemize}

\subsection{Results}
We provide affirmative answers to the first question and for part of the second 
question. Before stating our main results we set up some notation. In the 
setting of DPST we let $S = \{v \mid f(v) > 0\}$ denote the set of terminals
and let $N = |S|$. We also let $k = f(V)$.
Note that in the case of DST, $N = k$, while in the setting of DGST and DCST,
$S = \bigcup_i g_i$ and $k$ is the sum of the requirements.

We obtain poly-logarithmic approximation ratios for DGST, DCST and DPST in 
planar digraphs. These are the first non-trivial polynomial-time approximations 
for these problems, and we note that the ratios essentially match
the known approximation ratios for these problems in \emph{undirected} 
planar graphs.
\begin{theorem}\label{thm:polymatroid-main}
  For any fixed $\eps > 0$, there exists a polynomial time 
  $O\left(\frac{\log^{1+\epsilon} n \log k \log N}{\epsilon\log\log n}\right)$-approximation 
  algorithm for the Directed Polymatroid Steiner Tree on planar graphs. 
  In the special cases of Directed Group Steiner Tree and Directed Covering 
  Steiner Tree on planar graphs, the  approximation ratios can be improved to 
  $O(\log k \log^2 N)$. 
\end{theorem}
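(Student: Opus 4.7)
The plan is to reduce each of DPST, DGST, and DCST on a planar digraph to the corresponding problem on a rooted tree via the tree embedding \algemb developed as the first contribution of the paper, and then apply known polylogarithmic tree algorithms. I will handle DPST (which captures both of the other problems) first, and then explain how the improved bound for DGST and DCST is obtained.

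For DPST, the first step would be to apply \algemb to $(G,r)$ to obtain a (random) rooted tree $T$ with $V(G)$ among the leaves, enjoying two properties: (i) any $r$-out-tree $H$ in $G$ corresponds to an $r$-out-tree $H'$ in $T$ spanning the same set of terminals with $\mathbb{E}[c_T(H')]\le D\cdot c(H)$, where $D=O(\log^{1+\eps} n/(\eps\log\log n))$; and (ii) any $r$-out-tree in $T$ can be projected back to an $r$-out-tree in $G$ of no greater cost that spans the same terminal set. Since the polymatroid constraint $f(V_T)=f(V)$ depends only on the set of terminals spanned, this reduces DPST on $G$ to DPST on a rooted tree. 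On a tree I would invoke the Calinescu--Zelikovsky recursive greedy algorithm for PST, which gives an $O(\log k\log N)$ guarantee. Composing with the embedding distortion and projecting the tree solution back via (ii) yields the claimed $O(\log^{1+\eps}n\log k\log N/(\eps\log\log n))$ bound.

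For DGST and DCST, I would avoid the $\log^{1+\eps}n/\log\log n$ distortion by a density-based rounding scheme applied directly on $G$. The outer loop is the Garg--Konjevod--Ravi framework: iteratively find an $r$-tree of approximately minimum density (cost per fractional group still uncovered), add it, and recurse on the residual LP; the standard analysis loses an $O(\log k)$ factor in total. Each inner iteration reduces to min-density DST on a planar digraph, which I would solve to an $O(\log^2 N)$ factor by combining the $O(\log^2 k)$ planar integrality gap for the natural DST cut LP (the paper's second contribution) with the usual Lagrangian/scaling trick that turns covering into density. Multiplying these gives $O(\log k\log^2 N)$ for DGST; DCST then follows via the standard requirement-splitting reduction.

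The step I expect to be the main obstacle is verifying that the one-sided embedding in property (i) is strong enough for submodular coverage: unlike connectivity in DST, we cannot freely re-route an optimum solution after embedding, so the lift of an arbitrary fixed $r$-tree must be controlled pointwise (or at least in expectation over the random tree while preserving the spanned terminal set deterministically). A secondary technical point is verifying that the planar $O(\log^2 N)$ integrality-gap argument for DST lifts to min-density DST with only a constant-factor loss, so that the outer recursion for DGST composes cleanly without extra logarithmic overhead.
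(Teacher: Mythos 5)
Your DPST plan is, in outline, the paper's: embed into a tree, solve PST on the tree, project back. Two corrections, though. First, the factor attribution is swapped: the embedding of Theorem \ref{thm:planar_dir_tree_emb} is \emph{deterministic} (not a random FRT-style embedding) and its graph-to-tree distortion is only $O(\log N)$; the $O(\log^{1+\eps}n\log k/(\eps\log\log n))$ comes from the Calinescu--Zelikovsky algorithm on trees (Theorem \ref{lem:polymatroid-trees}), not from the embedding. Second, the ``main obstacle'' you flag is resolved not by controlling the lift pointwise but by the fact that the embedding maps each terminal $t$ to a \emph{set of copies} $M(t)$, and one simply redefines the polymatroid on the tree as $f_T(Z)=f(\{t: M(t)\cap Z\neq\emptyset\})$; submodularity makes duplicate copies harmless, and Property \ref{thm:emb:tree-to-graph} guarantees the projected graph solution reaches every terminal some copy of which was reached in the tree.

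For DGST and DCST there is a genuine gap. Your inner subroutine must be minimum-density \emph{group} Steiner tree (cost per newly covered group), not minimum-density DST, and the $O(\log^2 k)$ integrality-gap bound for \ref{DST-LP} plus a Lagrangian/scaling trick does not produce it: the standard reduction from groups to single terminals (attach a dummy sink to each group) destroys planarity, which is precisely why DGST is strictly harder than DST in this setting ($\Omega(\log^2 k/\log\log k)$-hard even on trees, versus $O(\log k)$ for planar DST). The paper obtains the $O(\log k\log^2 N)$ bound for DGST by staying on the embedded tree: expand each group $g_i$ to $g_i'=\bigcup_{t\in g_i}M(t)$, run the Zosin--Khuller $O(d\log k)$ GST algorithm on the height-$d=O(\log N)$ tree, and project back losing another $O(\log N)$. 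For DCST your ``standard requirement-splitting reduction'' also breaks on the tree, because a tree solution can satisfy a requirement $h_i$ with $h_i$ copies of a \emph{single} terminal; the paper fixes this with a modified flow LP that caps the total flow into each copy-set $M(t)$ at one, and rounds it iteratively via the $O(d)$-approximate min-density DGST algorithm on trees. If you want a density-based outer loop (as the paper uses for the multi-rooted versions), the min-density DGST subroutine itself must still be obtained through the tree embedding, not through the DST cut LP.
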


Our second result is on the integrality gap of a natural cut/flow based LP for 
DST; see \ref{DST-LP} for a formal description. In contrast to a polynomial-factor 
lower bound on the gap in general directed graphs, 
we show the following via a constructive argument.

\begin{theorem}\label{thm:integrality-gap}
  The integrality gap of (\ref{DST-LP}) is upper bounded by $O(\log^2 k)$ 
  in planar digraphs.
\end{theorem}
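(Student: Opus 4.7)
The plan is to leverage the tree-embedding machinery developed in the proof of Theorem~\ref{thm:polymatroid-main} and reduce the question to threshold rounding on a shallow tree. Starting from any optimal fractional solution $x^*$ to (\ref{DST-LP}) on a planar digraph $G$ with root $r$ and $k$ terminals, I first build the recursion tree $T$ corresponding to the Friggstad--Mousavi-style decomposition \cite{FriggstadM23}: at each node of $T$ one picks a constant number of shortest-path separators (à la Thorup's theorem \cite{Thorup04}) that together partition the terminals in the current subproblem so each child inherits at most half of them. Consequently $T$ has depth $d=O(\log k)$, and every $r$-to-terminal path in the planar instance can be realized as a concatenation of separator-path segments indexed by the ancestors in $T$ of a leaf representing that terminal.

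Next I would transfer $x^*$ to a fractional Group Steiner Tree solution $y^*$ on $T$, where the group associated with each original terminal $t$ is the set of leaves of $T$ that represent copies of $t$. The key claim is that $y^*$ is feasible for the natural GST LP on $T$ and that $c_T(y^*)\le O(\log k)\cdot c(x^*)$. The bound follows by charging, level by level in the recursion, the cost of separator-path segments that $y^*$ puts onto $T$-edges against the planar LP flow of $x^*$: at each recursive subproblem the cut constraints of (\ref{DST-LP}) guarantee that $x^*$ fractionally carries a unit of flow from $r$ to each surviving terminal, so $x^*$ restricted to that subproblem fractionally pays for the separator paths used to reach those terminals at that level. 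Summing over the $O(\log k)$ levels of $T$ yields the claimed $O(\log k)$ blowup.

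Finally, I would apply the Garg--Konjevod--Ravi threshold rounding \cite{GargKR00} to $y^*$. For a tree of depth $d$ and $k$ groups, the GKR analysis produces an integral GST solution whose expected cost is $O(d\log k)$ times the fractional value; with $d=O(\log k)$ this gives an integral GST solution on $T$ of expected cost $O(\log^2 k)\cdot c(x^*)$. Pulling back by replacing each chosen $T$-edge with its associated shortest-path separator segment in $G$ produces an integral $r$-tree that covers every terminal, of expected cost $O(\log^2 k)\cdot c(x^*)$, establishing the integrality gap bound. The main obstacle I anticipate is the LP-transfer step: the $O(\log k)$-approximation of \cite{FriggstadM23} uses only integral primal structure to pay for separator paths terminal by terminal, whereas here one must exhibit a genuinely LP-feasible $y^*$ on $T$ while incurring only an $O(\log k)$ overhead. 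This requires a careful flow/cut argument exploiting that Thorup's separators are shortest paths, so that the $x^*$-flow from $r$ into each subproblem can be fractionally ``captured'' by the separator at that level without any super-logarithmic loss.
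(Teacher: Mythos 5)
Your proposal has a genuine gap at exactly the step you flag yourself: the transfer of $x^*$ to an LP-feasible fractional Group Steiner solution $y^*$ on the recursion tree with only $O(\log k)$ overhead is never actually constructed, and it is the entire difficulty. The tree $\calT$ of Theorem~\ref{thm:planar_dir_tree_emb} is built by guessing the value of an \emph{integral} solution (the branch $\gamma/2^i$ one descends depends on $c(G')$), and Property~\ref{thm:emb:graph-to-tree} only projects integral $r$-trees; there is no analogous statement for fractional points of (\ref{DST-LP}), and distributing fractional mass among the copies $M(t)$ and among the $\gamma/2^i$ branches so that the tree cut constraints are satisfied is not routine. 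Even granting the transfer, your accounting does not reach the claimed bound: you lose $O(\log k)$ in the transfer and then $O(d\log k)=O(\log^2 k)$ in the Garg--Konjevod--Ravi rounding, which composes to $O(\log^3 k)$, not $O(\log^2 k)$; moreover the GST LP on trees has integrality gap $\Omega(\log^2 k/(\log\log k)^2)$ \cite{Halperinetal07}, so the rounding loss on the tree cannot in general be pushed below two logarithms, and the route through tree GST is therefore structurally capped at roughly $\log^3 k$ unless you exploit special structure of the instances produced.

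The paper avoids both issues by never passing to a tree LP at all. It runs the Friggstad--Mousavi recursion directly on the planar LP solution, using $2\log|S|\cdot\sum_e c_ex_e$ as the distance-pruning radius in place of a guessed $\opt$. A Markov-type argument (Lemma~\ref{lem:lp-feasible}) shows that pruning vertices beyond this radius destroys at most a $\frac{1}{2\log k}$ fraction of the flow to any terminal, so scaling $x$ up by $\bigl(1+\frac{1}{\log k}\bigr)$ restores feasibility of the induced LP solution on each sub-instance. Each level then pays $6\log k\cdot\sum_e c_ex_e$ for the three separator paths, the recursion has depth $O(\log k)$ because terminals halve, and the scaling factors compound to a constant, giving $O(\log^2 k)$ by a direct induction (Lemma~\ref{lemma-lp}). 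If you want to salvage your approach, the missing ingredient is precisely a fractional analogue of Property~\ref{thm:emb:graph-to-tree}; the paper's remark that extending the integrality-gap result to DGST/DCST faces ``technical challenges'' suggests this is nontrivial.
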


The bound we prove is weaker than the known $O(\log k)$ approximation (in fact
the proof is inspired by the same technique), and is unlikely to be tight. 
However, no previous upper bound was known prior to our work;
positive results have been obtained only for quasi-bipartite instances via 
the primal-dual method \cite{FriggstadKS16,FriggstadM21}.
LP based algorithms provide
several easy and powerful extensions to other problems, and are of much interest. 
The integrality gap of DST is also of interest in understanding the power and limitations
of routing vs coding in network information theory --- we refer the reader to 
\cite{AgarwalC04} and surveys on network coding \cite{FragouliS07,FragouliS08}.
We believe that the integrality gap of the natural LP for DGST and DCST is poly-logarithmic
in planar digraphs, however, there are some technical challenges in extending our 
approach and we leave it for future work.

\mypara{Multi-root versions:} Friggstad and Mousavi \cite{FriggstadM23} also considered
the multi-root version of DST and one can extend each of the problems we consider to 
this more general setting.
The input consists of multiple roots $r_1, \dots, r_R$. The goal is to find a minimum 
cost subgraph in which the relevant set of terminals  is reachable from 
\emph{at least one} of the roots.
Note that multi-root versions arise naturally
in some problems including information transmission (see the aformentioned work on 
network coding).
In general digraphs it is trivial to reduce the multi-rooted version to the single 
root version by adding an auxiliary
root vertex, but this reduction does not preserve planarity. Friggstad and Mousavi 
\cite{FriggstadM23} described an 
$O(R + \log k)$-approximation  for the multi-rooted version of DST. 
Using \emph{density}-based arguments (see Section \ref{sec:prelim}) 
combined with the aforementioned results, we obtain 
polylogarithmic approximation ratios for multi-rooted versions of all the considered problems 
in planar digraphs. For DST, our bound is better than the one in \cite{FriggstadM23} when
$R$ is $\omega(\log^2 k)$. 

\begin{theorem}\label{thm:multiroot}
  There is an $O(\log^2 k)$-approximation for the multi-rooted version of DST in 
  planar graphs. For the multi-rooted versions of DGST, DCST 
  there is a polynomial-time $O(\log k \log^2 N)$-approximation, and for DPST a 
  polynomial-time 
  $O\left(\frac{\log^{1+\epsilon} n \log k \log N}{\epsilon\log\log n}\right)$-approximation.
\end{theorem}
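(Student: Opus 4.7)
The plan is to reduce each multi-rooted problem to its single-rooted counterpart through a standard density-based greedy, invoking Theorems \ref{thm:polymatroid-main} and \ref{thm:integrality-gap} at each iteration. Let $\mathrm{OPT}^\ast$ denote the optimal multi-rooted cost and $k$ the total demand of the instance (number of terminals for DST, groups for DGST, sum of requirements for DCST, or $f(V)$ for DPST). Since an optimal multi-rooted solution partitions into subtrees, one rooted at each $r_i$, averaging yields some root $r_i$ and some $r_i$-tree $T$ whose density $c(T)/(\text{demand covered by }T)$ is at most $\mathrm{OPT}^\ast / k$. I would then iteratively find an approximate minimum-density subtree ranging over all roots, add it to the current solution, remove the newly covered demand from the residual instance, and repeat until all demand is satisfied.

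For multi-rooted DST the density sub-problem is to find a root $r_i$ and $r_i$-tree $T$ minimizing $c(T)/|V(T) \cap S'|$ over the residual terminal set $S'$. I would solve it via a partial-coverage adaptation of the Friggstad--Mousavi \cite{FriggstadM23} recursion, run once per root with a guessed coverage target, which should yield an $O(\log k)$-approximate min-density subtree. A harmonic-sum accounting then bounds the total greedy cost by $O(\log k) \cdot O(\log k) \cdot \mathrm{OPT}^\ast = O(\log^2 k) \cdot \mathrm{OPT}^\ast$. For DGST, DCST, and DPST, the single-rooted algorithms underlying Theorem \ref{thm:polymatroid-main} are themselves density-style greedy procedures built on the tree embedding: their outer $O(\log k)$ harmonic factor already corresponds to a set-cover-type reduction to an approximate density sub-problem on the embedded tree. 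Replacing the innermost density search by one that simultaneously ranges over all roots preserves the internal approximation factor and leaves the outer analysis intact, so the multi-rooted approximation ratios coincide with the single-rooted ones.

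The main obstacle is implementing the min-density subroutine for DST without incurring an additional logarithmic factor: invoking the integrality gap of Theorem \ref{thm:integrality-gap} as a black box would give $O(\log^3 k)$, whereas I want $O(\log^2 k)$. I would therefore open up the Friggstad--Mousavi divide-and-conquer, introduce a partial-coverage target into the recursion, and charge the recursive cost only to the terminals actually covered, so that the density approximation remains $O(\log k)$. For the other three problems, the density sub-problem is already built into the single-rooted algorithms, and the only care needed is to verify that the tree embedding is constructed uniformly enough across different roots that a single density comparison across all $r_i$ is well defined.
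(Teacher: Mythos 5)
Your proposal is correct and follows essentially the same route as the paper: a density-based greedy reduction from the multi-rooted to the single-rooted setting (the paper's Lemma~\ref{lem:density-to-multi-root}), combined with $O(\log k)$- resp.\ polylog-approximate minimum-density subroutines for each problem (the paper's Theorem~\ref{thm:density} and Section~\ref{sec:multiroot-final}), including the observation that one must avoid invoking the LP gap as a black box for DST. The only cosmetic difference is that the paper implements the min-density/partial-coverage DST subroutine via dynamic programming for $\ell$-DST on the tree embedding (with care for duplicated terminals at the $\gamma$-halving nodes) rather than by directly modifying the Friggstad--Mousavi recursion, but the paper itself notes these are interchangeable.
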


We note that in DGST, DCST, and DPST, the approximation factors for the multi-root 
versions actually match those of the single root setting. 

\begin{remark}\label{rem:ext}
  It is not difficult to see that the algorithm of Friggstad and Mousavi 
  \cite{FriggstadM23} and ours extends to several other rooted problems involving 
  budget constraints on cost or terminals, and prize-collecting versions;
  this is briefly discussed in Section \ref{sec:multi-root}.
\end{remark}

\begin{remark}
 \cite{FriggstadM23} observed that their approach extends to the node-weighted case. 
 The standard transformation from edge-weights to node-weights does not necessarily 
 preserve planarity, and hence the extension holds due to the specific technique. 
 Our results also hold for node weights. Even in \emph{undirected} graphs 
 there is no known polynomial-time poly-logarithmic approximation 
 for node-weighted GST --- this is because metric tree embeddings do not apply to 
 reduce the problem to trees.
 Thus, our results are new even for node-weighted undirected planar graphs.
\end{remark}

\subsection{Overview of Ideas} 
The $O(\log k)$-approximation for DST on planar graphs given by Friggstad and Mousavi 
\cite{FriggstadM23} uses a recursive divide-and-conquer structure. 
We provide a brief overview.
The algorithm uses Thorup's shortest path separator theorem applied to directed graphs:
\begin{lemma}[\cite{FriggstadM23,Thorup04}]\label{lem:dir_sep}
  Let $G$ be a planar directed graph with non-negative edge costs $c(e)$,
  non-negative vertex weights $w(v)$, and a root $r \in V$ such that every vertex 
  in $V$ is reachable from $r$. There exists a polynomial time algorithm 
  to find three shortest dipaths $P_1, P_2, P_3$ starting at $r$ such 
  that every weakly connected component of 
  $G \setminus (P_1 \cup P_2 \cup P_3)$ has at most half the vertex weight of $G$.
\end{lemma}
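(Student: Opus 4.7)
The plan is to reduce to Thorup's shortest path separator theorem for undirected planar graphs by feeding it a \emph{directed} shortest path tree. Since every vertex in $V$ is reachable from $r$, we can compute via Dijkstra's algorithm a shortest-path out-arborescence $T$ rooted at $r$, in polynomial time. By construction, the unique path in $T$ from $r$ to any vertex $v$ is a directed shortest dipath in $G$ whose length equals the directed distance $d_G(r,v)$.

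First I fix a planar embedding of $G$ and let $G^u$ denote the underlying undirected graph, inheriting the same embedding; then $T$ may be viewed as a rooted spanning tree of $G^u$. Next, I apply Thorup's separator construction to $G^u$ using $T$ as the given shortest path tree together with the vertex weights $w$. The output is a collection of three root-to-vertex paths $Q_1, Q_2, Q_3$ in $T$ such that each connected component of $G^u \setminus (Q_1 \cup Q_2 \cup Q_3)$ has total $w$-weight at most $w(V)/2$. Each $Q_i$ is a root-to-vertex path in the directed arborescence $T$, hence is a directed shortest dipath of $G$; setting $P_i := Q_i$ thus yields three shortest dipaths starting at $r$. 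Finally, the weakly connected components of $G \setminus (P_1 \cup P_2 \cup P_3)$ are in bijection with the connected components of $G^u \setminus (Q_1 \cup Q_2 \cup Q_3)$, so the balance guarantee carries over.

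The delicate point, and the main obstacle, is justifying that Thorup's algorithm still produces a balanced separator when the spanning tree is a shortest path tree for the \emph{directed} metric rather than for the undirected metric of $G^u$. Inspecting the proof, the separator property rests on two ingredients: a weight-balancing centroid-style argument on $T$ that only uses the tree structure and the vertex weights $w$, and a Jordan-curve argument on the planar embedding that bounds how non-tree edges interact with tree paths and only uses the fact that $T$ is a spanning tree of a planar graph. Neither ingredient actually requires $T$ to realize undirected shortest paths; the shortest-path hypothesis enters only to ensure the output paths are themselves shortest paths, which is automatic here because $T$ is a \emph{directed} shortest-path arborescence from $r$. Verifying that Thorup's case analysis (on the choice of the balancing non-tree edge and on how the three paths are stitched from tree-to-LCA segments) goes through with a directed $T$ is the technical heart of the argument, but no new ideas are required; this matches the extension recorded by Friggstad and Mousavi~\cite{FriggstadM23}.
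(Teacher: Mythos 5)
Your proposal is correct, and it coincides with the intended derivation: the paper imports this lemma from \cite{FriggstadM23,Thorup04} without proof, and the standard argument there is exactly what you describe --- build a directed shortest-path out-arborescence from $r$, observe that Thorup's three-path separator lemma holds for an \emph{arbitrary} rooted spanning tree of a planar graph (the balance and Jordan-curve arguments never use the shortest-path property of the tree), and note that the resulting root-to-vertex paths are directed shortest dipaths while the components become weakly connected components. No gap; your flagged ``delicate point'' is resolved exactly as you say.
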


The high-level idea in \cite{FriggstadM23} is simple. Suppose we can guess the optimum 
solution value for a given DST instance, say $\opt$. Then one can remove all vertices 
$v$ farther than $\opt$ from $r$ (since they will not be in any optimum solution), 
and use the preceding theorem to find 3 paths of cost at most $3\opt$ such that 
removing the paths yields components, each of which contains at most half the 
original terminals. We can shrink the paths into $r$ and recurse on the "independent" 
sub-instances induced by the terminals in each component. The recursion depth is 
$O(\log k)$ which bounds the total cost to $O(\log k) \cdot \opt$.
The main issue is to implement the guess of $\opt$ in each recursive call. 
The authors obtain a quasi-polynomial time algorithm by 
brute force guessing $\opt$ to within a factor of $2$.
They obtain a polynomial-time algorithm by a refined argument where they folding the 
guessing into the recursion itself. We take an alternate perspective on this algorithm 
by \emph{explicitly} constructing the underlying 
recursion tree of the algorithm. Theorem \ref{thm:planar_dir_tree_emb} shows that we 
can view this recursion tree as a ``tree embedding'' for directed planar graphs 
that is suitable for rooted problems. The power of the embedding is that it essentially 
reduces the planar graph problem to a problem on trees which we
know how to solve. A caveat of our tree embedding is that it 
creates \emph{copies} of terminals.
Interestingly, for DGST and DPST this duplication
does not cause any issues since the definitions of these problems are rich enough to 
accommodate copies. For DCST one needs a bit more care to obtain a better bound than 
reducing it to DPST, and we describe the details in the technical section. 
This parallels the situation in undirected graphs where
probabilistic metric tree embeddings \cite{FRT} are used to reduce 
the GST, CST, and PST problems to trees, 
and furthermore, is the only known method to solve those problems. 
The formal description of the tree embedding is given below.

\begin{theorem}
\label{thm:planar_dir_tree_emb}
Let $G = (V, E)$ be a directed planar graph with edge costs $c: E \to \R_{\geq 0}$, 
a root $r \in V$, and a set of terminals $S \subseteq V$. 
Let $\gamma \leq c(E)$, and let $n := |V|$ and $k := |S|$. There exists an efficient 
algorithm that outputs a directed rooted out-tree $\calT = (V_T, E_T)$ with root $r_T$,
edge costs $c_T: E_T \to \R_{\geq 0}$ and a mapping $M: S \to 2^{V_T}$ that maps each 
terminal in $G$ to a set of terminals in $V_T$, that satisfies the following properties:
\begin{enumerate}
  \item \textbf{Size}: $|V_T| = O(k^3 \gamma)$, and for each terminal $t \in S$, 
  $|M(t)| = O(k\gamma)$. Furthermore, all $M(t)$ are disjoint from each other. \label{thm:emb:size}
  \item \textbf{Height:} The height of $T$ is at most $O(\log k)$. \label{thm:emb:height}
  \item \textbf{Projection from Graph:} For any $r$-tree $G' \subseteq G$
  with $c(G') \leq \gamma$
  there exists a $r_T$-tree $T' \subseteq \calT$ with 
  $c_T(T') = O(\log k) c(G')$, in which for each terminal $t \in S \cap G'$,
  $M(t) \cap T' \neq \emptyset$.\label{thm:emb:graph-to-tree}
  \item \textbf{Projection to Graph:} For any $r_T$-tree $T' \subseteq \calT$, 
  there exists a $r$-tree $G' \subseteq G$ with $c(G') \leq c_T(T')$ 
  and for each terminal 
  $t \in S$, if $M(t) \cap T' \neq \emptyset$ then $t \in G'$. Furthermore, we 
  can compute $G'$ efficiently. \label{thm:emb:tree-to-graph}
\end{enumerate} 
\end{theorem}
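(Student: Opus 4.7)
The plan is to construct $\calT$ by explicitly materializing the recursion tree of the Friggstad--Mousavi divide-and-conquer algorithm, using Lemma~\ref{lem:dir_sep} at each level. A recursive call on a subinstance $(H, \rho, S_\rho)$---with $H$ the current subgraph, $\rho$ its local root, and $S_\rho \subseteq S$ the surviving terminals---creates a corresponding internal node $u \in V_T$. Before branching, I prune $H$ to vertices reachable from $\rho$ within distance $\gamma$, since no vertex beyond this bound can participate in a $\rho$-tree of cost at most $\gamma$. I then apply Lemma~\ref{lem:dir_sep} with unit weight on $S_\rho$ to obtain three shortest dipaths $P_1, P_2, P_3$ from $\rho$, each of cost at most $\gamma$, whose removal halves the number of local terminals in every weakly connected component. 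Each terminal $t \in S_\rho \cap (P_1 \cup P_2 \cup P_3)$ is added as a leaf child of $u$ with incident edge cost $d_H(\rho, t)$, and this leaf is placed in $M(t)$. For each weakly connected component $C$ of $H \setminus (P_1 \cup P_2 \cup P_3)$ still containing a terminal of $S_\rho$ and each candidate entry root $\rho_C \in V(C)$ having an incoming arc from some separator-path vertex $p$, I recurse on $(H[V(C)], \rho_C, S_\rho \cap V(C))$ and attach the resulting subtree root to $u$ by an edge of cost $d_H(\rho, p) + c(p, \rho_C)$.

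The height bound (item~\ref{thm:emb:height}) is immediate from the halving guarantee of Lemma~\ref{lem:dir_sep}. For the size bound (item~\ref{thm:emb:size}) I count branchings level-by-level: each terminal is added at most once as a leaf per subinstance whose separator path it lies on, the per-subinstance branching is polynomial in the number of candidate entry roots in the pruned graph, and summing over the $O(\log k)$ levels yields the claimed $O(k\gamma)$ per-terminal and $O(k^3\gamma)$ total bounds.

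The tree-to-graph direction (item~\ref{thm:emb:tree-to-graph}) is the easier half: every tree edge was built as the concatenation of a specific shortest-path prefix in $H$ together with at most a single entry arc, so an $r_T$-tree $T' \subseteq \calT$ pulls back to the union of the corresponding paths in $G$, which is a connected $r$-subgraph from which one extracts an $r$-tree of cost at most $c_T(T')$ containing every terminal whose copy appears in $T'$.

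The main obstacle is the graph-to-tree direction (item~\ref{thm:emb:graph-to-tree}), proved by induction on recursion depth. Given $G' \subseteq G$ with $c(G') \le \gamma$, at each node $u$ I include in $T'$ (i) the leaf copy of every terminal of $G'$ lying on a separator path of $u$, together with the corresponding shortest-path edge, and (ii) for every component $C$ that $G'$ enters, the child subtree corresponding to $G'$'s specific entry root (which exists by the exhaustive branching), built recursively. The delicate point is obtaining $c_T(T') = O(\log k) c(G')$ rather than the naive $O(\gamma \log k)$: at each level I charge the shortest-path prefix cost used to reach every entry-point $p$ against the cost of the $\rho$-to-$p$ subpath actually taken by $G'$---which is at least as long, because the separator paths are shortest paths in the pruned $H$---and since each edge of $G'$ is charged at most $O(\log k)$ times along its ancestral chain in the recursion, the stated bound follows.
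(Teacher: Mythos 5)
Your proposal diverges from the paper's construction in two ways that each create a genuine gap.

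First, you drop the guess-halving branch and instead try to recover the $O(\log k)\cdot c(G')$ bound by paying, per entry point $p$, only the shortest-path distance $d_H(\rho,p)$ and charging it to the $\rho$-to-$p$ subpath of $G'$. This charging double counts: if $G'$ enters $m$ different components through entry arcs whose tails $p_1,\dots,p_m$ all lie near the far end of a single expensive path in $G'$, then $\sum_j d_H(\rho,p_j)$ can be $m$ times $c(G')$ at a \emph{single} level, since the subpaths of $G'$ reaching the $p_j$ share edges. Your "charged at most $O(\log k)$ times along its ancestral chain" bounds the depth but not the branching within a level. The paper avoids this entirely by buying the \emph{whole} separator $P$ as one edge of cost $c(P)\le 3\phi$, and by keeping a second child at each node that recurses on the same graph with guess $\phi/2$. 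When projecting $G'$ into the tree, one first descends this halving chain to the node where $\phi/2\le c(G')\le\phi$, so that $c(P)\le 6\,c(G')$; the recurrence then closes using disjointness of the components. Without some mechanism that ties the separator cost to $c(G')$ rather than to $\gamma$, Property~\ref{thm:emb:graph-to-tree} fails for cheap trees $G'$.

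Second, recursing on $(H[V(C)],\rho_C,\cdot)$ for \emph{every} candidate entry root $\rho_C$ of every component gives a branching factor of up to $\Theta(n)$ per level, hence $n^{O(\log k)}$ nodes over $O(\log k)$ levels --- quasi-polynomial, violating both the $O(k^3\gamma)$ size bound and efficiency. It also does not handle an $r$-tree $G'$ that enters a component at several distinct vertices, since a single entry root cannot reach all of $G'$'s terminals in that component. The paper sidesteps both issues by contracting $P$ into $r$, so each component has a unique canonical root (the contracted vertex) and exactly one recursive call; multiple entry points of $G'$ are then absorbed into that contracted root. Your height bound and the tree-to-graph direction are essentially sound, but the construction needs to be repaired along these two lines before Properties~\ref{thm:emb:size} and~\ref{thm:emb:graph-to-tree} can be established.
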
 

\smallskip
Our proof of Theorem~\ref{thm:integrality-gap} on the LP integrality gap is inspired 
by the algorithm of \cite{FriggstadM23}. Instead of guessing $\opt$ we use the LP 
optimum value as the estimate. This is a natural idea, however, in order to prove an 
integrality gap we need to work with the original LP solution for the recursive 
sub-instances. We use a relatively simple trick for this wherein we overpay for the 
top level of the recursion to construct feasible LP solutions for the sub-instances 
from the original LP solution; the over payment helps us to
argue that the cost of the LP solutions for the sub-instances is only slightly larger 
and this can be absorbed in the recursion since
the problem size goes down. 

\smallskip
Finally, for the multi-rooted version we rely on a simple reduction to the single root 
problem via the notion of density, which is a standard idea in covering problems.

\subsection{More on related work}
\label{sec:relatedwork}
There is extensive literature on algorithms for network design in both undirected and
directed networks with more literature on undirected network design.
Standard books on combinatorial optimization \cite{Schrijver-book,Frank-book},
and approximation algorithms \cite{Vazirani-approx,WilliamsonS-approx}
cover many of the classical problems and results. We also point to 
the surveys \cite{GuptaK11,KortsarzN10} on network design.
In this section we describe some closely related work and ideas. 

\mypara{Directed Steiner Tree:} Zelikovsky \cite{Zelikovsky97} was the first to address 
the approximability of DST. He obtained an $O(k^{\eps})$-approximation for any fixed 
$\eps > 0$ via two ideas. He defined a recursive greedy algorithm and analyzed its 
performance as a function of the depth of the recursion. He then showed that one can 
reduced the problem on a general directed graph to a problem on a depth/height $d$ DAG 
(via the transitive closure of the original graph) at the loss of an approximation 
factor that depends on $d$. Charikar et al \cite{Charikaretal99} refined the algorithm 
and analysis in \cite{Zelikovsky97} and combined it with the depth reduction, they 
showed that one can obtain an $O(d^2 k^{1/d}\log k)$ approximation in $O(n^{O(d)})$-time;
this led to an $O(\log^3 k)$-approximation in quasi-polynomial time. 
Subsequentaly Grandoni et al \cite{GrandoniLL22}
improved the approximation to $O(\log^2 k/\log \log k)$ in quasi-polynomial time 
via a more sophisticated LP-based approach. A different approach that also yields the same bound was given by Ghuge and Nagarajan \cite{GhugeN22}
and this is based on a refinement of the recursive greedy algorithm for walks in graphs \cite{ChekuriP05}. The advantage
of \cite{GhugeN22} is that it yields an $\Omega(\log \log k/\log k)$-approximation
in quasi-polynomial time algorithm for the budgeted
version of DST; the goal is to maximize the number of terminals in a $r$-rooted tree with a given budget of
$B$ on the cost of the tree. 

Zosin and Khuller \cite{ZosinK02} showed that the natural cut-based LP relaxation has an integrality gap of
$\Omega(\sqrt{k})$ for DST. However, their example only showed a gap of $\Omega(\log n)$ as a function of
the number of nodes $n$. There was some hope that the integrality gap is poly-logarithmic in $n$, however
\cite{LiL22} recently showed that the gap is $\Omega(n^{\delta})$ for some $\delta > 0$ by modifiying
the construction in \cite{ZosinK02}. Interestingly these lower bound examples are DAGs with $O(1)$-layers
for which the recursive-greedy algorithm yields an $O(\log k)$-approximation in polynomial-time!
Rothvoss \cite{Rothvoss11} showed that $O(\ell)$-levels of the Lasserre SDP hierarchy when applied to the standard cut-based LP reduces the integrality gap to $O(\ell \log k)$ on DAGs with $\ell$ layers.
This was later refined to show that $O(\ell)$-levels of
the Sherali-Adams hierarchy suffices \cite{FriggstadKKLST14}. However, both these approaches also require quasi-polynomial time to obtain a poly-logarithmic approximation. 

\mypara{Group Steiner Tree:} The group Steiner tree problem (GST) in undirected graphs was
introduced by Reich and Widemeyer \cite{ReichW89} and it was initially motivated by an application in VLSI design.
Garg, Konjevod and Ravi \cite{GargKR00} obtained an $O(d \log k)$ approximation in depth $d$ trees via an elegant randomized
rounding algorithm of the fractional solution to a natural LP relaxation; one can reduce the depth to $O(\log N)$ via the
fractional solution and hence they obtained an $O(\log N \log k)$-approximation. They obtained an algorithm for general
graphs via probabilistic tree embeddings \cite{FRT}. Zosin and Khuller obtained an alternate deterministic $O(d \log k)$-approximation on trees.
The randomized algorithm of \cite{GargKR00} can also be derandomized via standard methods \cite{Charikar_Chekuri_Goel_Guha_1998}. 
The integrality gap of the natural LP for GST was shown to be $\Omega(\log^2 k/(\log \log k)^2)$ by Halperin et al. \cite{Halperinetal07}.
This gap motivated the inapproximability result of Halperin and Krauthgamer who showed that GST in trees is
hard to  approximate within a factor of $\Omega(\log^{2-\eps} k)$. This was further improved to $\Omega(\log^2 k/\log \log k)$ \cite{GrandoniLL22}
under stronger complexity theoretic assumption.
Note that one can consider node-weighted GST. In general directed graphs one can see that
node-weighted and edge-weighted problems are typically reducible to each other, however this is not necessarily the case in undirected graphs.
The known approaches to approximate GST in general graphs in polynomial time uses probabilistic tree
embeddings (or, more recently, oblivious routing trees \cite{Racke08,CGL15,cllz22} which are intimately connected to tree embeddings). However, node-weighted
problems do not admit such tree embeddings and thus we do not have polynomial-time poly-logarithmic approximation for GST in
node-weighted undirected graphs. 

There is a strong connection between GST, its directed counterpart DGST and DST.
As we remarked, it is easy to see that in directed graphs, DST and DGST are equivalent. One can also reduce
GST to DST by adding a dummy terminal $t_i$ for each group $g_i$ and connecting all the vertices in $g_i$
to $t_i$ via directed edges. Thus GST admits an $O(\log^2 k/\log \log k)$-approximation in quasi-polynomial time 
though the best polynomial-time algorithm loses another log factor due to tree embeddings.
On the other hand, via the height reduction approach and path expansion, one can reduce DST to GST in trees in quasi-polynomial-time
at the loss of an $O(\log k)$ in the approximation ratio (details of this
are essentially folklore but can be seen in \cite{ChekuriEGS11}). This partially explains the reason why the hardness results for DST are essentially
based on the hardness of GST in trees.

\mypara{Covering Steiner Tree and Polymatroid Steiner Tree:}
The Covering Steiner Tree problem was first considered by Konjevod, Ravi and Srinivasan \cite{KonjevodRS02} as a common generalization
of GST and the $k$-MST problems. They obtained a poly-logarithmic approximation by generalizing the ideas from GST (see also \cite{EvenKS05}). 
Gupta and Srinivasan subsequently improved the ratio \cite{GuptaS06}. Calinescu and Zelikovsky \cite{Calinescu_Zelikovsky_2005} defined
the general Polymatroid Steiner Tree problem (PSP) and its directed counterpart (DPSP). They were motivated by both theoretical considerations as well as applications in wireless networks. Submodularity provides substantial power to model a variety of problems. PSP is easily seen to generalize GST and CST.
However, unlike GST and CST, even in trees there is no easy LP relaxation for PSP that one can formulate, solve and round. 
Thus, \cite{Calinescu_Zelikovsky_2005} used a different approach. Chekuri, Even and Kortsarz \cite{Chekuri_Even_Kortsarz_2006} had shown
that the recursive greedy algorithm of \cite{Charikaretal99} can be adapted to run in polynomial-time on trees after preprocessing it
to reduce the degree and height. The recursive greedy approach naturally generalizes to PSP/DPSP just as the greedy algorithm
for Set Cover generalizes to Submodular Set Cover \cite{Wolsey82}. Via this generalization, \cite{Calinescu_Zelikovsky_2005} obtained polynomial-time
approximation algorithms for PSP in trees and hence in general graphs via tree emebeddings. For DPST they obtained quasi-polynomial-time
approximation algorithms.

\subsection{Preliminaries}
\label{sec:prelim}
Let $G = (V, E)$ be a directed graph with edge costs $c:E\rightarrow\R_+$. 
For $E'\subseteq E$, we denote $c(E') = \sum_{e \in E'}c(e)$. 
We assume all edge costs $c(e) \geq 1$ and are polynomially bounded in $n$. For problems 
considered in this paper, this is without loss of generality by guessing the cost of the optimal 
solution $\opt$, contracting edges with cost much smaller than $\opt$, and scaling appropriately.

We define the minimum density DST problem. 
\begin{definition}
	Given an instance of DST on a graph $G=(V,E)$ with root $r$, 
	the \emph{density} of a partial solution $F \subseteq E$ is $c(F)/k(F)$ where 
	$k(F)$ is the number of terminals in $S$ that have a path from $r$ in $G[F]$. 
	The minimum-density DST problem is to compute a solution of minimum density in a 
	given instance of DST.
\end{definition}
One can similarly define minimum density versions of DPST 
(which generalizes DGST and DCST); for a partial solution $F \subseteq E$, we let 
$S_F$ denote the set of terminals in $S$ that have a path from $r$ in $G[F]$. 
The density of $F$ is $c(F)/f(S_F)$, where $f$ is the given polymatroid.

\medskip

\mypara{Graph notation.}
We use $V(H)$ and $E(H)$ to refer to the vertices and edges of a graph $H$ 
when the vertex and edge sets have not been explicitly specified.
For $S \subseteq V$, we use $E[S]$ to denote the set of edges of $E$ with both 
endpoints in $S$, $G[S]$ to denote the subgraph $(S, E[S])$ \emph{induced by $S$} 
in $G$, and $\delta^+(S) = \{(u, v) \in E : u \in S, v \not \in S\}$ to denote 
the \emph{out-cut} of $S$. For $u \in V$ we use $\mathbbm{1}_{u\in S}$ to denote 
the \emph{indicator} of the vertex $u$ being in $S$, i.e. 
$\mathbbm{1}_{u\in S} = 1$ if $u \in S$ and is $0$ otherwise.
For a path $P \subseteq G$, we define the \emph{length} as the number of edges on the path.
For a given edge-cost function $c$, we denote $d_c(r,t)$ as the length of 
\emph{shortest $r$-$t$ path} in $G$ with edge weights $c$; we drop the subscript 
$c$ if it is clear from context.
For $r \in V$ an \emph{out-tree rooted at $r$} is a subgraph $T = (V_T, E_T) 
\subseteq G$ such that there is a unique $r$-$v$ path for every $v \in V_T$.
The \emph{height} of the tree is the maximum length of a $r$-$v$ path, for $v \in V_T$. 
The \emph{size} of the tree is the number of vertices in the tree $|V_T|$.
For a subgraph $G' \subseteq G$, we use $G/G'$ to denote the graph obtained by 
contracting every edge of $G'$ and $G - G'$ to denote the graph obtained by deleting 
every edge in $G'$.
A \emph{weakly connected component} of $G$ is a connected component of the underlying 
undirected graph obtained from G by ignoring the edges orientations. 

\mypara{Submodular functions.}
Let $f: 2^{V} \rightarrow \mathbb{R}$ be a set function over ground set $V$. 
The function $f$ is \emph{monotone} if  $f(X)\le f(Y)$ for every 
$X\subseteq Y\subseteq V$, \emph{submodular}
if $f(X)+f(Y)\ge f(X\cap Y)+f(X\cup Y)$ for every $X,Y\subseteq V$, 
and \emph{normalized} if $f(\emptyset) = 0$. An integer-valued, normalized, 
monotone and submodular function is called a \emph{polymatroid}.

\section{Recursive Tree Embeddings for Directed Planar Graphs}
\label{sec:emb}

We show that we can view the recursion tree given by the algorithm of 
\cite{FriggstadM23} as a tree embedding by proving Theorem 
\ref{thm:planar_dir_tree_emb}.
As described in Section \ref{sec:intro}, the algorithm of \cite{FriggstadM23} 
starts with an upper bound $\gamma$ for the cost of an optimal solution 
(we call this $\opt$). In order to fold the guessing of $\opt$ into the recursion, 
the algorithm makes two recursive calls and takes the minimum. For the first 
recursive call, it applies Lemma \ref{lem:dir_sep} to
obtain a planar separator, buys the separator, and recurses on the resulting 
weakly connected components. 
The second recursive call divides the ``guess'' $\gamma$ by two.
The polynomial runtime comes from the fact that at each step, we either 
halve the guess of $\opt$ or halve the number of terminals.

We define a subroutine $\algsep$ (Algorithm \ref{algo:prune_sep}) to describe 
the first recursive call, which
takes as input a graph $(G = (V, E), c)$ with 
root $r \in V$, terminals $S \subseteq V$, and a guess $\gamma$ for the cost of the 
optimal solution.
$\algsep((G,c), r, S, \gamma)$ removes all vertices further than $\gamma$ away 
from $r$, and uses Lemma \ref{lem:dir_sep} on the resulting graph with vertex weights 
set to 1 on the terminals and 0 elsewhere. This yields a \emph{planar separator} 
$P := P_1 \cup P_2 \cup P_3$ in which each resulting component of $G \setminus P$ has at 
most half the terminals.  
The subroutine contracts $P$ into $r$; each component of $G \setminus P$ 
corresponds to a new subinstance induced by the terminals in that component 
along with $r$. $\algsep((G, c), r, S, \gamma)$ returns $P$ 
along with the subinstances for each component. 

\begin{algorithm}[H]
    \caption{Prune and Separate Procedure}
    \label{algo:prune_sep}
    \vspace{1mm}
  $\algsep((G = (V, E), c), r, S, \gamma):$
  \begin{algorithmic}
    \State Delete all vertices $v \in V$ with $d_c(r, v) > \gamma$.
    \State Let $P := P_1 \cup P_2 \cup P_3$ be given by applying Lemma 
    \ref{lem:dir_sep} with weights $w(v) = \mathbbm{1}_{v\in S}$ for $v \in V$.
    \State Let $G_P$ be obtained from $G$ by contracting $P$ into $r$.
    \State Let $C_1', \dots, C_\ell'$ be components of $G \setminus P$, and let 
    $C_i \gets G_P[C_i' \cup \{r\}]$ \\
    \Return $(P, C_1, \dots, C_\ell)$
  \end{algorithmic}
\end{algorithm}

Given this subroutine, the tree embedding is simple. We define a recursive function 
$\algemb$, which takes as input a graph $(G = (V, E), c)$ with 
root $r \in V$, terminals $S \subseteq V$, and a ``guess'' $\gamma$. 
The algorithm instantiates a root node $r_T$ and constructs two trees 
corresponding to the two recursive calls 
made by \cite{FriggstadM23}:
\begin{enumerate}[(1)]
    \item Call $\algsep((G, c), r, S, \gamma)$ and construct the trees 
    $\calT_i = \algemb(C_i, r, S_i, \gamma)$ recursively 
    for each subinstance $(C_i, r, S_i)$. Add an auxiliary node $v^*$
    and connect it to the root of each subtree with a zero-cost edge. \label{emb-informal-sep}
    \item Recursively construct the tree $\calT_h = \algemb((G,c), r, S, \gamma/2)$.\label{emb-informal-half} 
\end{enumerate}

$\algemb((G, c), r, S, \gamma)$ connects $r_T$ to the root of $\calT_h$ 
with a zero-cost edge. It also connects $r_T$ to 
$v^*$ with an edge of cost $c(P)$, where $P$ is the planar separator 
constructed in \ref{emb-informal-sep}. These edge costs 
from $r_T$ correspond to the costs of choosing 
each recursive path. See Figure \ref{fig:emb} for a summary. 

As described in Theorem \ref{thm:planar_dir_tree_emb}, 
we would like this tree embedding 
to maintain some representation of the terminals $S$. 
It is not immediately clear how one could accomplish this; the first 
recursive call decomposes $G$ while the second takes in a copy of $G$, so the 
same terminals can appear in both corresponding subtrees. Therefore, we need to 
allow for multiple copies of the same terminal. 
The algorithm of \cite{FriggstadM23} includes a terminal $t$ in the 
solution either when $t$ is in some planar separator or when $t$ is the only 
remaining terminal 
in $S$, in which case it buys the shortest path $r$-$t$ path. 
To represent this in the 
tree embedding, we create a copy of a terminal for every separator or base case it is 
in. We denote copies of $t$ as $t^P$ where $P$ is the separator 
or shortest path containing $t$, and let $M(t)$ denote the set of all copies of $t$.
\begin{figure}
    \centering
    \includegraphics[width=0.5\linewidth]{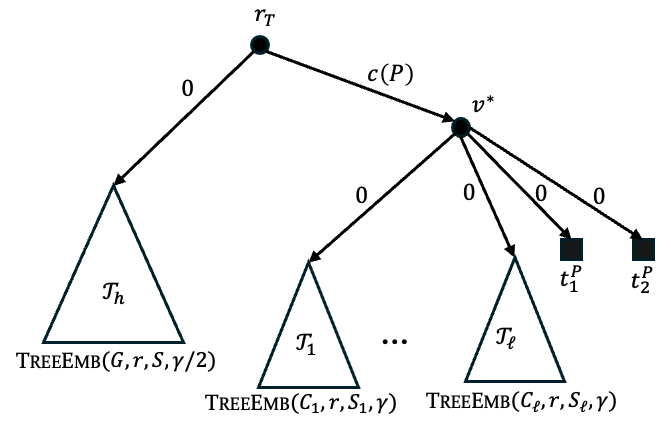}
    \caption{The tree $\calT$ given by $\algemb$, where $t_1, t_2$ are terminals in $P$.}
    \label{fig:emb}
\end{figure}

The full algorithm is described in Algorithm \ref{algo:dir_plan_emb}. 
We use a subroutine \\
$\addtree((\calT, M), (\calT', M'), v)$, which updates the tree
$\calT$ to include $\calT'$ via a zero-cost edge from $v \in V(\calT)$ 
to the root of $\calT'$. This subroutine also updates the mapping $M$ to include $M'$.
\begin{algorithm}
  \caption{Tree Embedding Construction for Directed Planar Graphs} 
  \label{algo:dir_plan_emb}
  \vspace{1mm}
  $\algemb\left(\left(G = \left(V, E\right), c\right), r, S, \gamma\right)$:
  \begin{algorithmic}
    \If{$\gamma < 1$ or $S = \emptyset$} \Return null \label{algemb:line:bc0}
    \EndIf
    \State Initialize $r_T$, $\calT \gets \left(V_T = \left\{r_T\right\}, E_T = \emptyset\right)$ as tree embedding with empty cost function $c_T$\newline
     Initialize $M(t) \gets \emptyset$ for all $t \in S$ as the terminal copies
    \If{$|S| = 1$}
        \State Let $P$ be a shortest $r$-$t$ path in $G$ 
        \State Add a new vertex $t^P$ to $V_T$ and to $M(t)$ 
        \State Add a new edge $e_t = (r_T, t^P)$ to $E_T$, and define $c_T(e_t) = c(P)$.
        \State \Return $(\calT, r_T, M)$
    \EndIf
    \vspace{1mm}
    \State Recursively construct $(\calT_h, r_h, M_h) \gets \algemb((G,c), r, S, \gamma/2)$ 
    \State $\addtree((\calT, M), (\calT_h, M_h), r_T)$
    \vspace{1mm}
    \State $(P, C_1, \dots, C_\ell) \gets \algsep((G,c), r, S, \gamma)$ 
    \State Add a new vertex $v^*$ to $V_T$, edge $e^* = (r_T, v^*)$ to $E_T$ and define $c_T(e^*) = c(P)$.
    \For{$i \in [\ell]$}
        \State Recursively construct $(\calT_i, r_i, M_i) \gets \algemb(C_i, r, S \cap C_i, \gamma)$ 
        \State $\addtree((\calT, M), (\calT_i, M_i), v^*)$
    \EndFor 
    \For{$t \in S \cap P$} \label{algemb:line:terminal-start}
        \State Add a new vertex $t^P$ to $V_T$ and $M(t)$
        \State Add a new edge $e_t = (v^*, t^P)$ to $E_T$, and define  $c_T(e_t) = 0$.
    \EndFor 
    \State\Return $(\calT, r_T, M)$
  \end{algorithmic}
  \vspace{3mm}
  $\addtree((\calT, M), (\calT', r', M'), v):$
  \begin{algorithmic}
    \State $V(\calT) \gets V(\calT) \cup V(\calT')$
    \State $E(\calT) \gets E(\calT) \cup E(\calT')$, where $c_T(e) = c_{T'}(e)$ for all $e \in E(\calT')$ 
    \State For every terminal $t \in S$, modify $M(t) \gets M(t) \cup M'(t)$
    \State $e \gets (v, r'), c_T(e) = 0$; add $e$ to $E(\calT)$\\
    \Return $((\calT = (V(\calT), E(\calT)), c_T), M)$
  \end{algorithmic} 
\end{algorithm}

We claim that the tree and mapping $(\calT, r_T, M)$ given by 
$\algemb(G, r, S, \gamma)$ satisfies all properties 
of Theorem \ref{thm:planar_dir_tree_emb}. We outline the ideas here.
Properties \ref{thm:emb:size} and \ref{thm:emb:height} 
follow easily from simple inductive arguments similar to those in \cite{FriggstadM23}. 
We note that the construction given in Algorithm \ref{algo:dir_plan_emb} 
constructs a tree of height $O(\log(k\gamma))$; this can be improved to 
$O(\log k)$ (see Remark \ref{rem:embedding_height_reduction}). 
For Property \ref{thm:emb:graph-to-tree}, let $G' \subseteq G$ be an $r$-tree.
Let $i$ be such that $\gamma/2^{i+1} \leq c(G') \leq \gamma/2^i$. 
To construct $T' \subseteq \calT$, we include the path from 
$r_T$ to the root of the subtree given by $\algemb(G, r, S, \gamma/2^i)$. 
We then include the planar separator branch, which costs at most 
$3 \gamma/2^{i} \leq 6 c(G')$ and recurse on sub-instances. 
The cost bounds and terminal copy requirements follow immediately from the 
feasibility and cost analysis of \cite{FriggstadM23}. 
For Property \ref{thm:emb:tree-to-graph},
let $T' \subseteq \calT$ be an $r_T$-tree. Notice that there are only 
two types of non-zero cost edges in $\calT$; those corresponding to planar separators 
or those corresponding to shortest paths in the base case. 
Let $\calP(T')$ be the set of all such paths and separators corresponding to 
non-zero cost edges of $T'$. We let $G' = \cup_{P \in \calP(T')} P$.
It is clear that $G'$ can be computed efficiently by traversing $T'$ and 
including all relevant paths. The cost and terminal copy guarantees are simple; 
see Appendix \ref{appendix:sec:emb} for detailed proofs.

\begin{remark}
\label{rem:embedding_height_reduction}
  The height of the tree can be reduced to $O(\log k)$ by increasing the 
  degree by a factor of $O(\log \gamma)$. Instead of only making two recursive 
  calls, one can simultaneously make recursive calls 
  $\algemb(G, r, S, \gamma/2^i)$ for $i \in [\log \gamma]$, 
  along with the recursive call using the planar separator.  
  Each recursive call then only considers the ``planar separator'' branch and 
  proceeds inductively.
\end{remark}

\section{Group, Covering, and Polymatroid Directed Steiner Tree} 
\label{sec:applications}
In this section we give an overview of the proof of Theorem 
\ref{thm:polymatroid-main}, providing polynomial time 
polylogarithmic approximation algorithms for DGST, DCST, and DPST. 
Although DPST generalizes DCST and DGST, we discuss each of the three 
problems separately in this section since we obtain better approximation 
ratios for DGST and DCST; moreover, our algorithmic techniques for DGST and 
DCST are different.
For each of these problems, let $G = (V, E)$ denote the input graph, 
$c:E\rightarrow\R_+$ denote the edge costs, $r \in V$ denote the root, and 
$S$ denote the set of terminals.
The embedding theorem given by Theorem \ref{thm:planar_dir_tree_emb} allows 
us to effectively reduce to special cases of the problems in which the input 
graph is a tree, as described by the following high-level framework:

\begin{enumerate}[(a)]
	\item Use Theorem \ref{thm:planar_dir_tree_emb} on inputs $(G, c), r, S$, 
	and $\gamma = c(E)$ to obtain a directed out-tree $\calT = (V_T, E_T)$ 
	rooted at $r_T$ with edge costs $c_T$, and for each terminal $t \in S$ 
	a set of ``copies'' $M(t) \subseteq V_T$. The new set of terminals $S_T$ 
	is the collection of all copies $\cup_{t \in S}M(u)$. 
	\item Compute an approximate solution to a 
	relevant problem on $\calT$, \label{app:solve-on-tree}
	\item Project the solution on $\calT$ to the graph $G$ using Property 
	\ref{thm:emb:tree-to-graph} of Theorem \ref{thm:planar_dir_tree_emb}.
\end{enumerate}
\vspace{1mm}

One challenge in directly applying the above framework is constructing the 
instance and problem to solve on $\calT$ in step (b)
above; this is because a terminal in $G$ contains several copies in $\calT$.
For DGST and DPST, the ability to deal with copies of terminals is quite 
naturally instilled into the problem definitions themselves. 
In DGST, we can simply expand the groups to include all copies of a terminal, 
and in DPST, we can appropriately redefine the underlying submodular function 
and rely on the diminishing marginal returns property.
Thus for DGST and DPST, we can directly solve the equivalent problem on the tree, as 
explained below:

\mypara{Directed Group Steiner Tree:} The input is a graph $(G, c)$ with a root 
$r$ and $k$ groups $g_1, \dots, g_k \subseteq V$. After applying step (a) 
of the framework, we consider the following instance of DGST. 
The input graph is the constructed tree $(\calT, c_T)$ with root $r_T$ and 
terminal set $S_T$. The new groups $g'_1, \ldots g'_k \subseteq V_T$ are 
defined as $g_i' := \cup_{u \in g_i}M(u)$ for every $i \in [k]$. 
To obtain an approximate solution on this instance, we use the following result 
by Zosin and Khuller \cite{ZosinK02} 
(also see \cite{GargKR00, Charikar_Chekuri_Goel_Guha_1998}).
\begin{theorem}[\cite{ZosinK02}]
\label{thm:group_steiner_tree:Zosin_Khuller}
  There exists a polynomial time $O\left(d\log k\right)$-approximation algorithm 
	for Group Steiner Tree when the input graph is a tree with height $d$.
\end{theorem}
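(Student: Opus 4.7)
The plan is to apply the Garg--Konjevod--Ravi LP-rounding paradigm to the natural cut-based relaxation of GST on trees. The LP has variables $x_e \in [0,1]$ for each tree edge, objective $\min \sum_e c_e x_e$, and constraints $\sum_{e \in C} x_e \geq 1$ for every group $g_i$ and every $r_T$-to-$g_i$ cut $C$. Since the input graph is a tree this relaxation is solvable in polynomial time, e.g.\ via a polynomial-size flow reformulation. By standard manipulations the optimal LP solution can be assumed to be monotone along each root-to-leaf path: $x_e \leq x_{e'}$ whenever $e'$ is the parent edge of $e$.

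Next I would apply top-down randomized rounding. Orient the tree away from $r_T$ and traverse edges in BFS order. For each non-root edge $e = (u,v)$ with parent edge $e_u$ into $u$, include $e$ in the rounded solution conditional on $e_u$ having been included, with conditional probability $x_e/x_{e_u}$; for edges incident to $r_T$ include $e$ with probability $x_e$. A straightforward induction shows that the marginal probability of keeping any edge $e$ equals $x_e$, so one round has expected cost $\mathrm{OPT}_{\mathrm{LP}}$, and the chosen edges always form an $r_T$-rooted subtree.

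The main obstacle, and the heart of the argument, is the coverage analysis: I need to show that any fixed group $g_i$ is covered by a single round of rounding with probability at least $\Omega(1/d)$ (equivalently, that scaling the LP by $O(d)$ yields constant coverage probability). GKR establishes this via a second-moment calculation, grouping vertices of $g_i$ by their least common ancestor at each level and applying a Chebyshev-type bound; the factor of $d$ arises because the second moment accumulates multiplicatively across levels of the tree, and this is tight. The Zosin--Khuller refinement instead routes flow level-by-level with a combinatorial rounding losing only a constant factor per level, giving the same $O(d)$ bound but deterministically.

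To finish, I would iterate in a greedy fashion: after each round remove the groups that have been covered and re-solve the LP on the uncovered groups. Each phase pays $O(d) \cdot \mathrm{OPT}_{\mathrm{LP}}^{\mathrm{residual}}$ in expectation and covers any remaining group with constant probability, so after $O(\log k)$ phases every group is covered with high probability by a union bound. Since the residual LP value never exceeds the original, the cumulative expected cost is $O(d \log k) \cdot \mathrm{OPT}_{\mathrm{LP}}$, and standard derandomization via conditional expectations turns the procedure into a deterministic polynomial-time $O(d \log k)$-approximation.
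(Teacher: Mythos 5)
This statement is an imported result: the paper does not prove it but cites Zosin--Khuller (and, in the theorem header and related-work discussion, also Garg--Konjevod--Ravi and Charikar et al.\ for the randomized rounding and its derandomization). Your reconstruction follows exactly that standard route---the cut LP on the tree, monotonization along root-to-leaf paths, top-down dependent rounding with marginals $x_e$, the $\Omega(1/d)$ per-round coverage probability, and $O(\log k)$ phases of $O(d)$ rounds each---so it matches the proof the paper is relying on; the only deferred step is the second-moment/Janson coverage lemma, which you correctly identify as the heart of GKR's analysis and which is exactly what the citation supplies.
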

Applying this result in conjunction with Property \ref{thm:emb:height} of 
Theorem \ref{thm:planar_dir_tree_emb} regarding the height of the tree 
$\calT$ gives us an $O(\log N \log k)$-approximation for the instance of DGST 
on $(\calT, c_T)$.

\mypara{Directed Polymatroid Steiner Tree:} The input consists of a graph 
$(G, c)$ with root $r$ and a polymatroid function 
$f:2^V \rightarrow\mathbb{Z}_{\geq 0}$. After applying step (a) of the 
framework, we consider the following instance of DPST. 
The input graph is the constructed tree $(\calT, c_T)$ with root $r_T$. 
The new polymatroid function $f_T:2^{V_T} \rightarrow\mathbb{Z}$ is defined as  
$f_T(Z) : = f\left(\left\{t \in S: M(t) \cap Z \not = \emptyset\right\}\right)$ 
for every $Z \subseteq V_T$. It is not hard to see that $f_T$ is a polymatroid, 
and that an evaluation oracle for $f$ can be used to construct an evaluation 
oracle for $f_T$ in polynomial time. To obtain an approximate solution for 
this instance, we directly apply the following result by Calinescu and 
Zelikovsky \cite{Calinescu_Zelikovsky_2005}:
\begin{theorem}[\cite{Calinescu_Zelikovsky_2005}]
\label{lem:polymatroid-trees}
For every $\epsilon > 0$,
there exists a polynomial-time 
$O\left(\frac{\log^{1+\epsilon}n\log k}{\epsilon\log\log n}\right)$-approximation 
algorithm for the Polymatroid Steiner Problem when the input graph is a tree,
assuming a polynomial time oracle for the polymatroid function.
\end{theorem}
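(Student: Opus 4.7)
The plan is to adapt the classical recursive greedy algorithm of Charikar et al.\ \cite{Charikaretal99} for Directed Steiner Tree to the polymatroid setting on trees, following the framework of \cite{Calinescu_Zelikovsky_2005} together with the tree-preprocessing idea of \cite{Chekuri_Even_Kortsarz_2006}. I would organize the proof in three pieces: a height-reducing preprocessing of the input tree, a recursive greedy subroutine on the reduced tree, and a submodular analog of the standard greedy covering analysis.

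First I would preprocess the input tree $T$ into a rooted tree $T'$ of height $h = O(\log n / \log\log n)$ with the guarantee that any feasible subtree in $T$ lifts to a feasible subtree in $T'$ whose cost is blown up by at most an $O(\log^{\epsilon} n / \epsilon)$ factor. This follows the balanced hierarchical decomposition of \cite{Chekuri_Even_Kortsarz_2006}: recursively split $T$ at a centroid (with respect to cost or terminal weight) into a bounded number of pieces, reattach them by zero-cost auxiliary edges, and choose the splitting parameter so as to trade off the resulting height against the cost blowup. An $\alpha$-approximation for PST on $T'$ then yields an $O(\alpha \log^{\epsilon} n / \epsilon)$-approximation on $T$.

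Next, on $T'$, I would run the recursive greedy subroutine $A_i(v, B)$, which given a subroot $v$, a residual polymatroid $f'$, and a target marginal-value budget $B$, returns a subtree rooted at $v$ that covers at least $B$ units of $f'$-marginal. The base case $A_1(v, B)$ picks the best single descendant by density. For $i > 1$, the inductive step iterates: at each child $u$ and each candidate demand $B' \le B$, it calls $A_{i-1}(u, B')$, selects the candidate subtree of minimum density with respect to the current residual $f'_X$ (where $X$ is the subtree bought so far), adds it to $X$, and repeats until $B$ total marginal value is accumulated. Setting the recursion depth to $i = h$ lets $A_i$ reach the full depth of $T'$, and the standard analysis yields an $O(h \log k) = O((\log n \log k)/\log\log n)$-approximation on $T'$; composing with the preprocessing loss gives the claimed bound.

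The main obstacle is verifying that the Charikar et al.\ analysis, originally phrased in terms of counts of covered terminals, carries over to polymatroid marginals. The crucial lemma is the Wolsey-style \cite{Wolsey82} submodular covering property: for any already-purchased vertex set $X$, the residual function $f_X(Z) := f(Z \cup X) - f(X)$ is again a monotone normalized polymatroid, and the minimum PST density with respect to $f_X$ is at most the minimum density with respect to $f$. Submodularity (diminishing marginals) is precisely the property that yields this inequality, mirroring the reduction of greedy submodular cover to greedy set cover. Once this is established, the usual telescoping-harmonic-sum argument produces the $O(\log k)$ factor, and the recursion composes with the height reduction to deliver the stated $O\!\left((\log^{1+\epsilon} n \log k)/(\epsilon \log\log n)\right)$ approximation; polynomial running time follows because the recursion invokes $A_{i-1}$ on $n^{O(1)}$ many (vertex, demand) pairs and $h$ is poly-logarithmic.
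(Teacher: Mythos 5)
This theorem is imported by the paper as a black-box citation to \cite{Calinescu_Zelikovsky_2005} (no proof is given in the paper), and your sketch follows exactly the route the paper itself attributes to that work: the recursive greedy of \cite{Charikaretal99} generalized to polymatroids via the Wolsey-style \cite{Wolsey82} residual-submodular-cover analysis, composed with the height/degree preprocessing of \cite{Chekuri_Even_Kortsarz_2006}, so the approach and the arithmetic composing $h=O(\log n/\log\log n)$ with the $O(\log^{\epsilon}n/\epsilon)$ restructuring loss are right. The one step you should not wave at is the running time: counting ``$n^{O(1)}$ (vertex, demand) pairs per level over $h$ levels'' yields $n^{O(h)}$, i.e.\ quasi-polynomial time (exactly the obstruction for general graphs), and obtaining a genuinely polynomial bound on trees is the nontrivial accounting contributed by \cite{Chekuri_Even_Kortsarz_2006} (further complicated here because the residual $f_X$ depends on the set accumulated so far, which obstructs naive memoization), so that step must be cited or reproved rather than asserted.
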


\vspace{1mm}
In both DGST and DPST, we apply step (c) to project the given solutions back to 
a solution on the original input graphs. By Property \ref{thm:emb:graph-to-tree}, 
this ``tree-embedding'' framework loses an additional $O(\log N)$ factor in the cost.
It is simple to verify that the correctness and approximation guarantees follow from 
appropriately applying properties of Theorem \ref{thm:planar_dir_tree_emb}.

\subsection{Directed Covering Steiner Tree}
We are given a graph $(G, c)$ with root $r$, and $q$ groups 
$g_1, \dots, g_q \subseteq V$, with requirements $h_1, \ldots, h_q$ 
respectively, where $\sum_i h_i = k$.
The algorithm for DCST is more involved than simply instantiating the 
framework to solve a DCST instance on trees. Technical complications arise 
because after applying step (a) of the framework to obtain a tree embedding 
$(\calT, c_T)$ and expanding groups to include all copies of a terminal 
(while keeping the same requirements, say),
a solution for DCST on this instance could satisfy the requirement of the 
$i^{th}$ expanded group by picking multiple copies of a single terminal from 
$g_i$. Consequently, it is unclear how to map such a solution on the tree back 
to a solution of our original DCST instance.

To circumvent this issue, we use an LP-based approach on the tree 
$(\calT, c_T)$ and expanded groups $g_1', \dots, g_q'$. 
The natural flow-based LP relaxation on DCST sends a flow of $h_i$ from the 
root to each group. We aim to modify this LP to bound the amount of flow 
reaching each set of copies $M(t)$.
To that end, we define an LP with flow variables $f_v$ for every $v \in V_T$ 
denoting the amount of flow from $r_T$ to $v$ and corresponding capacity 
variables $x_e$ denoting the amount of flow through $e$ for every $e \in E_T$.
The LP constraints guarantee the following:
\begin{enumerate}
	\item for each $i \in [q]$, $f$ supports a flow of at least $h_i$ from 
	the root $r$ to group $g'_i$,
	\item for each terminal $t \in S$, $f$ supports a flow of at most 1 to 
	the collection of its copies $M(t)$,
	\item the capacities given by $x$ support the flow $f$.
\end{enumerate}

It is not difficult to see that any integral feasible solution to this LP is 
an $r$-rooted tree which contains $h_i$ \emph{unique} terminals from each group $g_i$. 
While there are known algorithms for DCST on trees that are based on LP-rounding 
\cite{GuptaS06}, it is not clear if these techniques work for this modified LP. 
We describe a procedure that iteratively rounds solutions to our LP above by 
leveraging a connection to the minimum \emph{density} Directed Group Steiner 
Tree problem (MD-DGST) (defined in Section \ref{sec:prelim}). 

Let $\opt_{LP}$ denote the cost of a (fractional) optimal solution 
$(x^*, f^*)$ to the LP. Using our LP constraints and the fact that copies of 
distinct terminals are disjoint, we observe that a group $g_i$ can be 
partitioned into $h_i/2$ parts ($g_i = \uplus_{j \in [h_i/2]} g_i^{(j)}$) 
such that each part $g_i^{(j)}$ receives at least one unit of flow from 
the root $r_T$; here we assume that $h_i/2 \in \Z_+$ to simplify notation. 
We consider the MD-DGST instance defined on the tree $(\calT, c_T)$, 
root $r_T$, and groups $g'_{i, j} = \cup_{t \in g_i^{(j)}} M(t)$ for 
every $i \in [q]$ and $j \in [h_i/2]$. We note that since $(x^*, f^*)$ is 
feasible for our LP, it is a feasible fractional solution for a natural 
LP relaxation for MD-DGST (see \cite{ZosinK02}). Moreover, this fractional 
solution has density at most $2\opt_{LP}/k$, since the cost is $\opt_{LP}$ 
and the number of groups that receive at least one unit of flow is 
$\sum_{i \in [q]} h_i/2 = k/2$. To obtain a good feasible integral solution, 
we use the following result by Zosin and Khuller \cite{ZosinK02}.
\begin{theorem}[\cite{ZosinK02}]
  There exists a polynomial time $O(d)$-approximation (w.r.t the LP) for 
	the MD-DGST problem when the input graph is a directed out-tree 
	with height $d$.
\end{theorem}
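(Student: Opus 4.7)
The plan is to adapt the top-down LP-rounding technique of Garg, Konjevod, and Ravi to the density setting. First I would formulate the natural flow/coverage LP relaxation for MD-DGST on the out-tree $(\calT, c_T)$ with root $r_T$ and groups $g'_1, \dots, g'_m$: introduce edge capacity variables $x_e \in [0, 1]$ for each $e \in E_T$, coverage variables $y_i \in [0, 1]$ for each group, and flow variables certifying a flow of value at least $y_i$ from $r_T$ to $g_i'$ through capacities $x$; add the normalization $\sum_i y_i = 1$ and minimize $\sum_e c_T(e) x_e$. Any integer subtree $F$ of density $\rho$ yields a feasible fractional solution of value $\rho$ (take the indicator of $F$ and set $y_i = 1/k(F)$ for each covered group, then rescale), so the LP optimum value $\tau$ is a lower bound on the minimum integral density.

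Next, I would apply the recursive randomized rounding of Garg-Konjevod-Ravi to an optimal LP solution $x^*$. Process $\calT$ in root-to-leaf order; for each edge $e$ with parent edge $e'$, include $e$ independently with probability $x_e^*/x_{e'}^*$ conditional on $e'$ being included, and include each edge $e$ incident to $r_T$ with probability $x_e^*$ unconditionally. Let $F$ denote the resulting random subtree. A telescoping calculation shows that $\Pr[e \in E(F)] = x_e^*$ for every edge $e$, so $\mathbb{E}[c(F)] = \sum_e c_T(e) x_e^* = \tau$ by linearity of expectation. The technical heart of the analysis is the correlation inequality of Garg-Konjevod-Ravi specialized to our tree: for each group $g_i'$,
\[
  \Pr\bigl[V(F) \cap g'_i \ne \emptyset\bigr] \;\ge\; \Omega\bigl(y_i^*/d\bigr).
\]
Summing over $i$ gives $\mathbb{E}[k(F)] \ge \Omega(1/d)$, where $k(F)$ denotes the number of groups hit by $F$.

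To extract a deterministic density guarantee, I would consider the potential $\Phi(F) := c(F) - \alpha \cdot d\tau \cdot k(F)$ for a sufficiently small constant $\alpha > 0$ chosen so that the two bounds above force $\mathbb{E}[\Phi] \le 0$. The method of conditional expectations, applied to the random bits in root-to-leaf order, then produces a deterministic subtree $F^*$ with $\Phi(F^*) \le 0$, i.e., $c(F^*)/k(F^*) = O(d)\cdot \tau$; this is precisely an $O(d)$-approximation with respect to the LP.

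The main obstacle is the correlation lemma itself. A naive independence analysis of the events $\{v \in V(F)\}_{v \in g_i'}$ would give a hitting probability that can be arbitrarily close to zero when the fractional coverage $y_i^*$ is spread across many vertices of individually small mass --- this is precisely why undirected GST on trees pays an extra logarithmic factor. The density setting avoids that loss by crucially exploiting the positive correlation among these events induced by the top-down conditional sampling, combined with a convexity argument applied level by level in $\calT$, which shows that the hitting probability can degrade by at most a factor proportional to the depth $d$. Once this lemma is granted, the remaining steps are a routine application of linearity of expectation and the method of conditional expectations.
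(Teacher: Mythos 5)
This statement is not proved in the paper at all: it is imported verbatim as a black box from Zosin and Khuller \cite{ZosinK02}, whose argument is a deterministic, level-by-level averaging/charging argument on the fractional solution rather than randomized rounding. So your proposal is necessarily a different route --- essentially ``GKR rounding plus derandomization'' --- and the overall architecture (normalize $\sum_i y_i = 1$ so the LP value lower-bounds the minimum density, round once, compare $\E[c(F)]$ against $\E[k(F)]$, derandomize via conditional expectations) is a legitimate way to get an $O(d)$ bound with respect to the LP. Your observation that the density objective only needs a constant fraction of the fractional coverage in expectation, so that a single round of rounding suffices, is exactly the right reason an $O(d)$ (rather than $O(d\log k)$) guarantee is attainable here.

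The difficulty is that the entire content of the theorem is concentrated in the one step you do not prove: the claim $\Pr[V(F)\cap g_i'\neq\emptyset]\ge\Omega(y_i^*/d)$. This is the Garg--Konjevod--Ravi covering lemma for height-$d$ trees, and it does require a genuine argument (a level-by-level second-moment or convexity analysis); asserting it and gesturing at ``positive correlation'' does not discharge it. Moreover, your framing of why it should hold is off: the extra $\log k$ in GST on trees comes from having to repeat the rounding $O(\log k)$ times to cover \emph{every} group (a Set-Cover-type loss), and the $\log N$ comes from height reduction --- neither is evidence about the single-round hitting probability. Two smaller points you should also close: (i) $\E[\Phi]\le 0$ is consistent with the empty tree, so you need strict slack (choose $\alpha$ so that $\E[\Phi]\le -\eps\tau<0$, which forces $k(F^*)\ge 1$); and (ii) the method of conditional expectations requires computing $\Pr[g_i' \text{ hit}\mid \text{partial assignment}]$ exactly, which does work here via a bottom-up dynamic program over the tree (this is the Charikar--Chekuri--Goel--Guha derandomization) but is not automatic and should be stated. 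With the covering lemma proved and these repairs, your argument would give an alternative, self-contained proof of the cited theorem.
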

Thus we can obtain a tree $T_1$ of density at most 
$O(d\ \opt_{LP}/k)$. We remove all terminals $t \in S$ such that 
$M(t) \cap T_1 \neq \emptyset$. We then repeat this process 
until we satisfy the requirement that 
every group $i$ contains at least $h_i$ distinct $t \in g_i$ such that 
$T' \cap M(t) \neq \emptyset$. This terminates in polynomial time since 
each iteration removes at least one terminal. 
Using standard inductive arguments, one can 
bound the total cost of the edges by $O(d \log k) \opt_{LP}$. Using 
Property \ref{thm:emb:height} of Theorem \ref{thm:planar_dir_tree_emb}, we can bound 
the height of $\calT$ by $O(\log N)$. Thus we obtain a tree $T'$ satisfying 
the desired conditions with cost $O(\log k \log N) \opt_{LP}$.
The rest of the proof is similar to that of DGST and so we omit the details for brevity.

\section{Integrality Gap of Cut-based LP Relaxation}
\label{sec:lp}

In this section, we prove Theorem \ref{thm:integrality-gap} on the integrality 
gap of the LP relaxation for DST on planar graphs, defined as follows. 
Let $\mathcal{C}:=\{U\subseteq V: r\in U \text{ and } U\cap S\neq S\}$. 

\begin{equation}\label{DST-LP}
\tag{DST-LP}
\begin{aligned}
	\min\quad \sum_{e\in E}c_ex_e& \\
	s.t.\quad \sum_{e\in \delta^+(U)} x_e&\geq 1 \quad \forall\ U \in \mathcal{C} \\
			x_e &\geq 0 \quad \forall\ e\in E
\end{aligned}
\end{equation}

This is a relaxation of the integer program where, for every edge $e\in E$, 
we have a variable $x_e\in \{0,1\}$ which indicates whether $e$ is contained 
in the solution.
For every subset $U$ of vertices containing root $r$ with $U\cap S \neq S$, 
a feasible solution must contain at least one edge in $\delta^+(U)$, since it 
must contain a path from the root $r$ to every terminal in $S \setminus U$.
The LP contains an exponential number of constraints but has an efficient 
separation oracle (an $s$-$t$ min-cut computation).
One can also formulate a compact extended formulation with additional flow 
variables (see \cite{LiL22}). 

We now prove that integrality gap of the LP is at most $O(\log^2 k)$ via a 
constructive procedure. One can view the algorithm as running the recursive 
algorithm of \cite{FriggstadM23}
by using the LP value as the estimate for the optimum. 
Given an arbitrary feasible solution $x$ of (\ref{DST-LP}), Algorithm 
\ref{algo:dir_plan_lp} constructs a directed Steiner tree.

\begin{algorithm}
  \caption{$\alglp(G, r, S, x)$}
  \label{algo:dir_plan_lp}
  \begin{algorithmic}
    \If{$|S| \le 6$} 
      \State \Return $T$ obtained by connecting each terminal $t$ to $r$ via 
			a shortest $r$-$t$ path
    \EndIf
    \State $\bar{x}_e \gets \left(1+\frac{1}{\log |S|}\right)\cdot x_e$ 
    \State $(P, C_1, \dots, C_\ell) \gets \algsep(G, r, S, 2\log |S| 
		\cdot \sum_{e\in E}c_e\cdot x_e)$ \hfill (Algorithm \ref{algo:prune_sep})
    \State $T \gets P$ 
    \For{$i = 1, \dots, \ell$}
      \State $\bar{x}^{(i)} \gets $ the values of $\bar{x}$ restricted to $E(C_i)$.
      \State Compute $\alglp(C_i, r, S \cap C_i, \bar{x}^{(i)})$ to obtain 
			tree $T_i = (V_i, E_i)$ with root $r_i$. 
      \State Augment $T$ with $T_i$ by replacing each edge from $r_i$ with an 
			edge from the corresponding uncontracted node of $P$.
    \EndFor \\
    \Return $(T, r)$
  \end{algorithmic}
\end{algorithm}

The base case is when there are at most six terminals; the algorithm connects each 
terminal to the root directly via shortest paths.
Otherwise, the algorithm first scales up the $x_e$ values for every $e\in E$ by a 
factor of $\left(1+\frac{1}{\log |S|}\right)$. Then, it calls 
$\algsep((G, c), r, S, 2 \log |S| \cdot \sum_{e\in E}c_e x_e)$ and obtains several 
sub-instances; recall that the separator $P$ is contracted into the root in each 
sub-instance.
The algorithm recursively solves these sub-instances and returns the corresponding 
solution in the original graph $G$. We observe that we do not re-solve the LP in 
the recursion but instead use the induced fractional solution after scaling up $x$.

We now analyze the correctness and cost of the tree returned by the algorithm with 
respect to the cost of the LP solution $\sum_e c_e x_e$. 
In the recursive case we first show that $\bar{x}$ is a feasible LP solution 
after removing the distant vertices. 

\begin{lemma}\label{lem:lp-feasible}
	Let $G = (V, E)$ be a directed planar graph with edge costs 
	$c: E \to \R_{\geq 0}$, a root $r \in V$, and a set of terminals 
	$S \subseteq V$ of size $k$. Let $x$ be a feasible solution of 
	(\ref{DST-LP}). Let $V':=\{v\in V: d(r,v)>2\log k\cdot \sum_{e\in E}c_ex_e\}$ 
	and $E'$ be set of edges that are incident to some vertex in $V'$. 
	Let $\bar{x}$ be a fractional solution where
	$\bar{x}_e:=0$ if $e\in E'$, and otherwise 
	$\bar{x}_e:=\left(1+\frac{1}{\log k}\right)\cdot x_e$. 
	Then, $\bar{x}$ is a feasible solution to (\ref{DST-LP}) for the given instance.
\end{lemma}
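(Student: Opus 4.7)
The plan is to verify (DST-LP) feasibility of $\bar{x}$ by producing, for every terminal $t\in S$, a unit $r$-$t$ flow respecting the capacities $\bar{x}$; by LP duality this gives $\sum_{e\in \delta^+(U)} \bar{x}_e \ge 1$ for every $U\in \mathcal{C}$ (since every such $U$ separates $r$ from some terminal in $S\setminus U$). Throughout I use that $\bar{x}_e = (1+1/\log k)\, x_e$ when both endpoints of $e$ lie in $V\setminus V'$, and $\bar{x}_e = 0$ otherwise, so any $r$-$t$ path contained in $V\setminus V'$ has all its edges satisfying $\bar{x}_e = (1+1/\log k)\,x_e$.

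My first step would be to show $S\cap V' = \emptyset$. Feasibility of $x$ yields, for each $t\in S$, a unit $r$-$t$ flow under capacities $x$ whose total cost is at most $\sum_e c_e x_e$. Decomposing this flow into $r$-$t$ paths, each of cost at least $d(r,t)$, and noting that the path flow values sum to $1$, we get $d(r,t) \le \sum_e c_e x_e < 2\log k \cdot \sum_e c_e x_e$, so $t\in V\setminus V'$ and in particular every edge incident to $t$ has $\bar{x}_e > 0$.

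Next, fixing $t\in S$ and a path decomposition $\{(P_j, f_j)\}$ of a unit $r$-$t$ flow supported by $x$, any $P_j$ visiting some $v\in V'$ has cost strictly greater than $2\log k\cdot \sum_e c_e x_e$ (looking at the cost of its $r$-$v$ prefix). Aggregating,
\begin{equation*}
\sum_e c_e x_e \;\ge\; \sum_j f_j\, c(P_j) \;>\; \Bigl(\sum_{j:\,V(P_j)\cap V'\neq \emptyset} f_j\Bigr)\cdot 2\log k\cdot \sum_e c_e x_e,
\end{equation*}
so the total flow carried by paths visiting $V'$ is at most $1/(2\log k)$, and the remaining at least $1-1/(2\log k)$ units lie on paths entirely contained in $V\setminus V'$, hence using only edges in $E\setminus E'$.

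Scaling these ``good'' paths by the factor $(1+1/\log k)$---exactly what $\bar{x}$ does on $E\setminus E'$---produces an $r$-$t$ flow under $\bar{x}$ of value at least
\begin{equation*}
\Bigl(1+\tfrac{1}{\log k}\Bigr)\Bigl(1-\tfrac{1}{2\log k}\Bigr) \;=\; 1+\tfrac{1}{2\log k}-\tfrac{1}{2\log^2 k} \;\ge\; 1
\end{equation*}
for $k\ge 2$ (and the lemma is invoked only with $k>6$ from the algorithm). Doing this for every terminal $t$ and invoking LP duality completes the proof. The only delicate point is to align the threshold $2\log k$ defining $V'$ with the scale factor $1+1/\log k$ defining $\bar{x}$ so that the two losses---flow lost to far-away paths and capacity gained by scaling---cancel; the remaining steps are routine flow/cut manipulations.
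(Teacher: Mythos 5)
Your proposal is correct and follows essentially the same route as the paper: both arguments path-decompose the unit $r$-$t$ flow supported by $x$, use the Markov-type bound that paths touching $V'$ carry less than $\frac{1}{2\log k}$ flow (since each such path costs more than $2\log k\cdot\sum_e c_e x_e$ while the total flow cost is at most $\sum_e c_e x_e$), and then check that the scaling factor $(1+\frac{1}{\log k})$ restores the cut/flow value to at least $1$. The only cosmetic difference is that you argue on the flow side and invoke duality at the end, whereas the paper states the bound directly on $\sum_{e\in\delta^+(U)}\bar{x}_e$; your explicit verification that $S\cap V'=\emptyset$ and that the final inequality needs $k\ge 2$ are welcome details the paper leaves implicit.
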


\begin{proof}
	We note that $\bar{x}_e\geq 0$ since $x_e\geq 0$ for every $e\in E$. It suffices to show that for every $U\in \mathcal{C}$, we have $\sum_{e\in \delta^+(U)} \bar{x}_e \geq 1$.
	Let $t\in S$ be a terminal such that $t\not\in U$. Since $x$ is a feasible solution of (\ref{DST-LP}), $x$ supports a unit flow from root $r$ to terminal $t$. Thus $\sum_e c_e x_e \ge d_G(r,t)$. Since every  path from root $r$ to a terminal $t$ passing through $V'$ has length more than $2\log k\cdot \sum_{e\in E}c_ex_e$ (by definition of $V'$), by Markov's inequality, the amount of flow supported by $x$ from $r$ to $t$ passing through $V'$ is smaller than $\frac{1}{2\log k}$. This tells us that after removing all vertices in $V'$, $x$ still supports a flow of at least $1-\frac{1}{2\log k}$ from $r$ to $t$. Hence, we have
	$
		\sum_{e\in \delta^+(U)} \bar{x}_e 
		= \left(1+\frac{1}{\log k}\right)  
		\cdot  
		\sum_{e\in \delta^+(U) \setminus E'} x_e 
		\geq \left(1+\frac{1}{\log k}\right) \cdot \left(1-\frac{1}{2\log k}\right) 
		\geq 1. 
	$
\end{proof}

The lemma below bounds the cost of the tree returned by 
Algorithm~\ref{algo:dir_plan_lp}, which concludes the proof of 
Theorem \ref{thm:integrality-gap}.

\begin{lemma}\label{lemma-lp}
  Given a directed planar graph $G = (V, E)$ with edge costs 
	$c: E \to \R_{\geq 0}$, a root $r \in V$, and a set of terminals 
	$S \subseteq V$ of size $k$. Let $x$ be a feasible solution of 
	(\ref{DST-LP}). $\alglp(G, r, S, x)$, returns a feasible directed Steiner 
	tree with cost $O(\log^2 k)\cdot \sum_{e\in E}c_ex_e$.
\end{lemma}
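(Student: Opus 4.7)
The plan is to prove Lemma~\ref{lemma-lp} by strong induction on $k := |S|$, writing $L := \sum_{e \in E} c_e x_e$ for the LP value. The base case $|S| \le 6$ is immediate: since $x$ is LP-feasible it supports a unit $r$-to-$t$ flow for every $t \in S$, so $d_G(r,t) \le L$, and the algorithm's tree, a union of at most six shortest paths, has cost $O(L) = O(\log^2 k)\cdot L$ (interpreting $\log^2 k$ as $\log^2 \max(k,2)$).

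In the inductive step, the first and hardest task will be to verify that the restricted vector $\bar{x}^{(i)}$ is a feasible solution to (\ref{DST-LP}) for the sub-instance $(C_i, r, S \cap C_i)$, so that the inductive hypothesis applies. A direct cut-based argument fails, because a cut $U_i$ in the contracted component $C_i$ lifts to a cut $U = (U_i \setminus \{r\}) \cup P$ in $G$ whose out-cut $\delta^+_G(U)$ may include edges from $P$ into sibling components $C_j'$ with $j \ne i$ that are absent from $E(C_i)$. I would instead argue via flows. By Lemma~\ref{lem:lp-feasible}, $\bar{x}$ supports a unit $r$-to-$t$ flow $f_t$ in $G$ for every $t \in S$. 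For $t \in S \cap C_i'$, flow conservation inside each sibling $C_j'$ (which contains no source or sink of $f_t$) forces the flow entering $C_j'$ from $P$ to equal the flow returning to $P$ from $C_j'$. Consequently, contracting $P$ to a single vertex and discarding every edge outside $E(C_i)$ preserves a unit $r$-to-$t$ flow in $C_i$ respecting capacities $\bar{x}^{(i)}$, and max-flow/min-cut then gives feasibility of $\bar{x}^{(i)}$.

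Granted feasibility, the cost analysis proceeds as a standard recursion. The three shortest dipaths produced by Lemma~\ref{lem:dir_sep} live in the pruned graph where every vertex has distance at most $\gamma := 2 \log k \cdot L$ from $r$, so $c(P) \le 3\gamma = 6 \log k \cdot L$. The planar separator property gives $|S \cap C_i| \le k/2$, and by induction the $i$-th recursive tree has cost at most $\alpha \log^2(k/2) \cdot L_i$ where $L_i := \sum_{e \in E(C_i)} c_e \bar{x}^{(i)}_e$. The edge sets $E(C_i)$ are pairwise disjoint and exclude edges internal to $P$ (which become self-loops after contraction), so $\sum_i L_i \le \sum_e c_e \bar{x}_e \le (1 + 1/\log k)\,L$. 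Adding the separator cost, the total is at most $6 \log k \cdot L + \alpha(\log k - 1)^2(1 + 1/\log k)\,L$, which a short algebraic check shows is at most $\alpha \log^2 k \cdot L$ for $\alpha$ a sufficiently large absolute constant; the scaling factor $(1 + 1/\log k)$ is precisely what is needed to absorb the additive separator term into the recurrence. Feasibility of the returned tree is clear: terminals in $P$ are included because $T \supseteq P$, and terminals in each $C_i$ are included inductively via the attached sub-tree $T_i$.
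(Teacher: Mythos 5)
Your proposal is correct and follows essentially the same route as the paper: induction on $|S|$ with the invariant $\mathrm{COST} \le \alpha(\log k+O(1))^2\sum_e c_ex_e$, feasibility of the rescaled solution via Lemma~\ref{lem:lp-feasible}, the bound $c(P)\le 6\log k\cdot\sum_e c_ex_e$ from the pruning radius, and disjointness of the $E(C_i)$ to charge the recursive costs against $(1+1/\log k)\sum_e c_ex_e$. Your flow-based justification that $\bar{x}^{(i)}$ remains feasible on each contracted component is a welcome elaboration of a step the paper states only briefly, and it is sound.
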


\begin{proof}
	By induction on the number of terminals, we prove that the cost of the tree 
	output by the algorithm is at most $6 (\log k+1)^2 \cdot  \sum_{e\in E}c_ex_e$.
	
	First we consider the base case when $k \le 6$. 
	We observe that if $x$ is a feasible solution then for every terminal $t$ 
	the length of the shortest $r$-$t$ path in $G$ is at most the fractional 
	LP cost. Thus the algorithm outputs a feasible solution whose cost is 
	at most $6 \sum_{e \in E} c_ex_e$.

	Consider the case where $k> 6$. Algorithm~\ref{algo:dir_plan_lp} finds 
	three paths $P_1, P_2, P_3$ and contracts their union into $r$ to 
	obtain graph $G_P$. The contraction creates several independent 
	sub-instances $(C_1, S\cap C_1), \ldots, (C_{\ell}, S\cap C_{\ell})$ 
	where $|S\cap C_i|\leq k/2$ for every $1\leq i \leq \ell$. 
	By Lemma \ref{lem:lp-feasible}, contraction preserves the feasibility 
	of the LP solution induced on the residual instance. Moreover, it is easy 
	to see that any integer solution to the residual instance together with 
	$P$ is a feasible integer solution to the original instance. 
	It remains to do the cost analysis. Let $COST(G, S)$ be the cost of the 
	tree output by Algorithm~\ref{algo:dir_plan_lp}. 

	The length of each of the paths $P_1,P_2,P_3$ is at most 
	$2\log k\cdot \sum_{e\in E}c_ex_e$, and thus their total cost is at most 
	$6\log k\cdot \sum_{e\in E}c_ex_e$. Since each edge $e$ of the graph $G-V'$ 
	is in at most one sub-instance, we have
	\begin{equation}
		\left(1+\frac{1}{\log k}\right)\cdot \sum_{e\in E}c_ex_e 
		\quad = \quad  
		\sum_{e\in E} c_e\bar{x}_e 
		\quad \geq \quad 
		\sum_{i=1}^{\ell}\sum_{e\in E(C_i)}c_e\bar{x}^{(i)}_e. \label{LP_analysis-1}
	\end{equation}
	
	By the induction hypothesis,
	\begin{align*}
		\sum_{i=1}^{\ell} COST(C_i', S\cap C_i) 
		&\leq \sum_{i=1}^{\ell} 6 (\log |S\cap C_i|+1)^2 \cdot \sum_{\mathclap{e\in E(C_i)}} c_e\bar{x}^{(i)}_e  \\
		&\leq \sum_{i=1}^{\ell} 6 \log^2 k \cdot \sum_{\mathclap{e\in E(C_i)}} c_e\bar{x}^{(i)}_e  & \text{(since $|C\cap C_i|\leq k/2$)}\\
		&\leq 6 \log^2 k \cdot \left(1+\frac{1}{\log k}\right) \cdot \sum_{e\in E}c_ex_e  & \text{(by inequality~\eqref{LP_analysis-1})}
	\end{align*}

	Thus,
	$\textnormal{COST}(G, S) \leq 6\log k\cdot \sum_{e\in E}c_ex_e+ 
	\sum_{i=1}^{\ell} \textnormal{COST}(C_i', S\cap C_i) 
	\le 6 (\log k+1)^2 \cdot \sum_{e\in E}c_ex_e$, completing the induction proof.
\end{proof}

\section{Multi-Rooted problems via Density Argument}
\label{sec:multi-root}
The main contribution of this section is the proof of Theorem 
\ref{thm:multiroot} regarding the multi-rooted versions of 
DST, DGST, DCST, and DPST. 
As we remarked, we cannot directly reduce the multi-rooted problems to the single-root
version while preserving planarity of the graph. We use a simple strategy via 
density-based arguments.
In Section \ref{sec:density} we discuss the min-density DST problem and other related 
versions; we apply this to the multirooted setting in Section \ref{sec:multiroot-final}.

\subsection{Min-Density DST and Extensions}
\label{sec:density}

In this section we describe a polynomial time approximation for MD-DST along with 
a few other simple variations of the DST problem involving constraints on number of 
terminals or prize-collecting versions. We define the problems as follows.

\mypara{Directed $\ell$-Steiner Tree ($\ell$-DST):} The input consists of 
$G=((V,E), c)$, a root $r\in V$, a set of terminals 
$S\subseteq  V \setminus \{r\}$, and an integer $1\le \ell \le |S|$. 
The goal is to find the minimum cost $r$-tree $T$ with $|T\cup S|=\ell$.

\mypara{Prize Collecting Directed Steiner Tree (PC-DST):} 
The input consists of $G=((V,E), c)$, a root $r\in V$, a non-negative integer 
weights $p$. The goal is to find the minimum cost $r$-tree where the cost of a 
set of edges of the $r$-tree $T$ is computed as 
$c(T) = \sum_{e\in T} c(e) + \sum_{v\not\in T} p(v)$.

\medskip

We remark that a polynomial time algorithm for MD-DST follows directly if we 
have such an algorithm for $\ell$-DST, since we can consider the optimal 
solutions to $\ell$-DST for all values of $\ell$ between $1$ and $|S|$ 
and output the solution with the minimum density ratio. 
This is also true of approximation algorithms, if we have an $\alpha$-approximation 
to $\ell$-DST, this gives an $\alpha$-approximation to MD-DST. 
Thus we will restrict our attention in this section to $\ell$-DST and PC-DST.
As mentioned in \Cref{rem:ext}, we can solve these problems by directly 
modifying the algorithm of \cite{FriggstadM23}. We will present an 
interpretation of this modification using the tree embedding we describe.

When the input graph $G$ is a tree rooted at $r$, there are folklore dynamic 
programming that solves both $\ell$-DST and PC-DST exactly. 
We give a brief overview of the dynamic programming solution here:

We may assume $G$ is a binary tree where non-leaf nodes have out degree at most 
two\footnote{If a node in $G$ has out-degree larger than $2$, we can create an out 
arrow to a new node with edge cost $0$ that handle all but one of its children. 
The resulting tree will still have size $O(n)$.}. 
For both problems, we will recursively compute the minimum cost solution for 
subtrees of $G$. For $\ell$-DST, we will actually recursively compute 
$\ell'$-DST for all values $\ell'<\ell$. Once we have recursively 
computed optimal solutions for the subtrees of the children of a node, 
it is straightforward to combine to get the optimal solution of a node.

When the input graph $G$ is a directed planar graph, we can use 
\Cref{thm:planar_dir_tree_emb} to construct a tree $\calT$.
However we cannot naively apply the dynamic programming algorithm as 
described above to $\calT$, 
since some terminals are duplicated. With a bit of care, we can still 
easily handle these duplications. In the construction of $\calT$, 
terminals are only duplicated when we make a recursive call corresponding 
to halving the guest value of the optimum solution $\gamma$. 
Thus in nodes that make this recursive call, we add a constraint that 
we are restricted to taking an optimum solution in only one of the 
two subtrees. It is easy to see that the proof of Property 
\ref{thm:emb:graph-to-tree} of \Cref{thm:planar_dir_tree_emb} 
satisfies this constraint; thus this framework only loses an 
$O(\log k)$ approximation factor in the cost.

\begin{theorem}\label{thm:density}
	There is an $O(\log k)$-approximation for MD-DST, $\ell$-DST, and PC-DST in 
	planar graphs that runs in polynomial time.
\end{theorem}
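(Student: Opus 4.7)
The plan is to first reduce MD-DST to $\ell$-DST: any $\alpha$-approximation for $\ell$-DST yields one for MD-DST by running the algorithm once for each $\ell \in \{1,\ldots,|S|\}$ and returning the solution of smallest density ratio. It therefore suffices to design $O(\log k)$-approximations for $\ell$-DST and PC-DST on planar digraphs. For both problems, I will use the framework of Section~\ref{sec:applications}: apply Theorem~\ref{thm:planar_dir_tree_emb} with $\gamma = c(E)$ to obtain a tree $(\calT,c_T)$ with root $r_T$ and mapping $M$; solve the corresponding problem on $\calT$ by dynamic programming; and project back to $G$ via Property~\ref{thm:emb:tree-to-graph}.

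For the tree problems themselves I would use the folklore DPs, applied after binarizing $\calT$ with zero-cost dummy edges so that every internal node has out-degree at most two. For $\ell$-DST, a table $D_v(\ell')$ stores the minimum cost of a sub-out-tree rooted at $v$ that reaches exactly $\ell'$ distinct original terminals; the recurrence at a binary internal node is a min-convolution of its two children's tables. For PC-DST, a single table $f(v)$ stores the optimal cost (edges plus prizes of unpicked terminals) of the subtree below $v$. Both DPs run in polynomial time on a polynomial-sized $\calT$.

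The principal obstacle is that Theorem~\ref{thm:planar_dir_tree_emb} creates duplicate copies of every terminal: a single $t\in S$ can appear in both the halving subtree $\calT_h$ and inside the separator subtrees $\calT_i$, so a naive DP could save the prize of $t$ or credit it towards $\ell'$ multiple times while only paying for one copy. I would resolve this exactly as hinted in the text, by imposing a \emph{single-branch constraint} at each halving node of $\calT$ (the internal node whose two children are the root of $\calT_h$ and the auxiliary $v^*$): the chosen sub-out-tree must lie entirely below one of these two children. Under this constraint, any two copies of the same original terminal $t$ are separated by some such halving node (since duplications are introduced only by the halving recursive call), so a feasible constrained tree contains at most one copy of each $t$ and the DP's count of distinct terminals is truthful; for PC-DST we can give every vertex in $M(t)$ prize $p(t)$ without over-paying. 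The DP modification is trivial: at halving nodes the recurrence takes a minimum over the two children rather than a min-convolution.

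It remains to check that the graph-to-tree direction of Theorem~\ref{thm:planar_dir_tree_emb} certifies the $O(\log k)$ factor \emph{under} this constraint, and this is essentially already in the proof of Property~\ref{thm:emb:graph-to-tree}: given an $r$-tree $G'\subseteq G$, the projection descends the halving chain to the unique level $i$ with $\gamma/2^{i+1} \le c(G') \le \gamma/2^i$ (picking the halving child at each intermediate halving node), switches to the separator child exactly once at that level, and then recurses inside each component subtree $\calT_j$. Every halving node visited has exactly one chosen child, so the constraint is respected, while $c_T(T') = O(\log k)\,c(G')$ and $M(t)\cap T' \neq \emptyset$ for every $t\in S \cap G'$. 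Combining the near-optimality of the constrained DP on $\calT$ with Property~\ref{thm:emb:tree-to-graph}, which projects the DP solution back to an $r$-tree in $G$ without increasing cost and while preserving which original terminals are reached, yields the desired $O(\log k)$-approximations for $\ell$-DST and PC-DST, and hence for MD-DST.
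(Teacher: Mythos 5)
Your proposal is correct and follows essentially the same route as the paper: reduce MD-DST to $\ell$-DST by trying all $\ell$, embed via Theorem~\ref{thm:planar_dir_tree_emb}, run the folklore tree DPs, and handle terminal duplication by restricting the DP to a single branch at each halving node, observing that the construction in the proof of Property~\ref{thm:emb:graph-to-tree} already obeys this restriction. Your write-up is in fact somewhat more explicit than the paper's sketch about why the single-branch constraint guarantees at most one copy of each terminal.
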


\subsection{Multi-rooted problems}
\label{sec:multiroot-final}

The following Lemma is an easy consequence of iteratively using the min-density 
algorithm and applying a standard inductive argument; we sketch the argument below. 

\begin{lemma}\label{lem:density-to-multi-root}
	Let $\mathcal{G}$ be a minor-closed family of graphs. Suppose there is an 
	$\alpha(k,n)$-approximation algorithm for the minimum-density DST (MD-DST) 
	problem on instances from graphs in $\mathcal{G}$ containing $n$ nodes and 
	$k$ terminals. Then there is an $O(\alpha(k,n) \log k)$-approximation for 
	the multi-root version of DST on graphs from $\mathcal{G}$ with $n$ nodes and 
	$k$ terminals.
\end{lemma}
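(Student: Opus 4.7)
The plan is to carry out the standard density-based greedy reduction: iteratively find a low-density partial solution, remove the terminals it covers, and repeat until all terminals are reached. Fix a multi-root instance $(G, \{r_1,\ldots,r_R\}, S)$ from $\mathcal{G}$ with $|S|=k$, and let $\opt$ denote the cost of an optimal multi-root solution.

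First, I would reduce multi-root min-density to the single-root case. Given the current set $S'\subseteq S$ of uncovered terminals, I run the assumed $\alpha(k,n)$-approximation for single-root MD-DST once per root $r_i$ on the instance $(G, r_i, S')$ and return the partial solution of smallest density among the $R$ outputs. This costs a factor $R$ in runtime but zero in approximation quality, and stays within $\mathcal{G}$ since each call uses the same graph. Crucially, I would argue the following density benchmark: the restriction of any optimal multi-root solution to $S'$ decomposes into $r_i$-trees $T_i^*$ with $\sum_i c(T_i^*) \le \opt$ and $\sum_i |V(T_i^*)\cap S'| \ge |S'|$, so by averaging some $T_i^*$ has density at most $\opt/|S'|$. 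Therefore the single-root MD-DST optimum for the best root is $\le \opt/|S'|$, and my subroutine returns a partial solution of density at most $\alpha(k,n)\cdot \opt/|S'|$.

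Next I would run the greedy loop: repeatedly call the subroutine, add its output to the solution $F$, and delete the newly covered terminals from $S'$ until $S'=\emptyset$. Correctness is immediate because $F$ is the union of $r_i$-trees, each rooted at one of the given roots. For the cost bound, let $S'_j$ denote the set of uncovered terminals at the start of iteration $j$ (with $|S'_1|=k$), and let $c_j, n_j$ be the cost and number of terminals newly covered in that iteration; then $c_j \le \alpha(k,n) \cdot \opt \cdot n_j / |S'_j|$. Summing and using the standard inequality $\sum_j (|S'_j|-|S'_{j+1}|)/|S'_j| \le H_k = O(\log k)$ yields
\[
c(F) \;=\; \sum_j c_j \;\le\; \alpha(k,n)\cdot \opt \cdot \sum_j \frac{n_j}{|S'_j|} \;=\; O(\alpha(k,n)\log k)\cdot \opt,
\]
which is the desired bound.

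The only mildly subtle point, and the one I would flag as the main obstacle, is justifying the averaging step that shows the existence of a single-root $r_i$-tree of density at most $\opt/|S'|$ within the optimal multi-root solution. A multi-root optimum need not be a disjoint union of $r_i$-trees in general because edges can be shared among several roots; however, since every vertex reachable from the root set is reachable from at least one specific $r_i$, one can charge each terminal $t\in S'$ to one root that reaches it and then select, for each $r_i$, an $r_i$-tree spanning the terminals charged to it whose cost is at most the cost of the edges in the optimum solution that participate in $r_i$-rooted paths to those terminals. The total cost of these $R$ trees is at most $\opt$ (each edge of the optimum is charged at most once per role, and the bookkeeping can be made clean by using the arborescence structure), and the averaging argument then goes through. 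Once this decomposition is in place, the rest of the proof is routine. The minor-closed hypothesis is used only to guarantee that every recursive/subroutine call takes place on a graph still in $\mathcal{G}$ (we only delete/contract, we never add a super-root), so the promised $\alpha(k,n)$-approximation applies throughout.
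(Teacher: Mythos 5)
Your proposal is correct and follows essentially the same route as the paper: a greedy loop that runs the single-root MD-DST algorithm from each root, keeps the lowest-density output, deletes the covered terminals, and finishes with the standard $H_k$ set-cover analysis. The one point you flag as subtle is handled in the paper by simply taking a \emph{minimal} optimum solution, which is then a branching (each vertex has in-degree at most one) and hence decomposes into edge-disjoint $r_i$-trees, making the averaging step immediate without any charging argument.
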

\begin{proof}[Proof of Lemma \ref{lem:density-to-multi-root}](Sketch)
Consider an instance of multi-rooted DST on a graph $G \in \mathcal{G}$ 
with roots $r_1,r_2, \ldots, r_R$. We will assume that every terminal is 
reachable from at least one of the roots, otherwise there is no feasible 
solution. The algorithm is as follows. Let $F = \emptyset$. 
Use the min-density algorithm from each of the roots $r_i$ to compute an 
approximate solution $F_i$. Let $j$ be the root with the smallest density 
solution among $F_1,F_2,\ldots,F_R$. We add $F_j$ to $F$ and remove the 
terminals covered by $F_j$. We iterate this procedure until all terminals 
are removed. The algorithm terminates in polynomial-time since each 
iteration removes at least one terminal. We now argue about the approximation 
ratio of this algorithm. Let $F^*$ be an optimum minimal solution to 
the multi-rooted problem. It is easy to see that $F^*$ is a 
branching\footnote{A branching is a collection of edges in which each node has 
in-degree at most one.} with roots $r_1, r_2, \ldots, r_R$. 
Let $F_i^*$ be the tree rooted at $r_i$ and let $k_i$ be the number of 
terminals in $F_i^*$ (note that $k_i$ can be $0$). We have $\sum_i k_i = k$. 
Thus there is some $i'$ such that
$c(F^*_{i'})/k_{i'} \le  c(F^*)/k$. Via the approximation guarantee of the 
min-density algorithm, $c(F_j)/k_j \le \alpha(k,n) c(F^*)/k$. 
Via a simple and standard inductive argument in covering problems such as 
Set Cover, this implies that
the cost of all the edges added in the algorithm is at most 
$\alpha(k,n) (1 + \ln k) c(F^*)$.
\end{proof}

Thus, in order to approximately solve the multi-root version, it suffices to 
solve the min-density version of the single root problem. Combining 
Theorem \ref{thm:density} and Lemma \ref{lem:density-to-multi-root},
we obtain the following result:

\begin{corollary}
There is an $O(\log^2 k)$-approximation for multi-root DST in planar graphs.
\end{corollary}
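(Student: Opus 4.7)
The plan is to derive this as a direct corollary by chaining Theorem~\ref{thm:density} with Lemma~\ref{lem:density-to-multi-root}. Theorem~\ref{thm:density} already supplies a polynomial-time $O(\log k)$-approximation for the minimum-density DST problem on planar graphs, so $\alpha(k, n) = O(\log k)$ in the notation of the lemma. Since the class of planar graphs is minor-closed (contracting or deleting an edge of a planar graph yields a planar graph, so every subinstance encountered during the iterative density-based procedure remains planar), the hypothesis of Lemma~\ref{lem:density-to-multi-root} is satisfied for $\mathcal{G}$ equal to the planar graphs.

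Plugging these two ingredients together, I would instantiate the multi-root algorithm from the proof of Lemma~\ref{lem:density-to-multi-root} as follows: maintain a current forest $F$ and a residual terminal set; in each iteration, run the $O(\log k)$-approximate MD-DST algorithm from each root $r_1,\dots,r_R$, select the root whose returned tree has smallest density, add that tree to $F$, and remove the terminals it covers. Since the MD-DST subroutine runs in polynomial time and each iteration strictly shrinks the residual terminal set, the overall procedure runs in polynomial time.

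For the approximation guarantee, I would invoke the bound given in Lemma~\ref{lem:density-to-multi-root}: the cost of the output is at most $\alpha(k,n)\cdot (1 + \ln k)\cdot c(F^\ast)$, where $F^\ast$ is an optimal multi-rooted solution. Substituting $\alpha(k,n) = O(\log k)$ from Theorem~\ref{thm:density} yields a bound of $O(\log^2 k)\cdot c(F^\ast)$, which is exactly the claimed approximation ratio. There is essentially no obstacle here beyond verifying that planarity is preserved throughout the reduction and that the density-covering inductive argument of Lemma~\ref{lem:density-to-multi-root} applies verbatim; both are immediate, so the proof is a one-line combination of the two preceding results.
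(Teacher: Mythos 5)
Your proposal is correct and is exactly the paper's argument: the corollary is obtained by combining Theorem~\ref{thm:density} ($\alpha(k,n) = O(\log k)$ for MD-DST in planar graphs) with Lemma~\ref{lem:density-to-multi-root}, using that planar graphs form a minor-closed family. Nothing further is needed.
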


\begin{remark}
Via density-based argument one can also prove that the integrality gap of a 
natural cut-based LP relaxation for the multi-root version of DST is at 
most $O(\log^3 k)$. This upper bound is unlikely to be tight; we leave the 
improvement in the bound to future work.
\end{remark}

The density-based argument extends in a natural fashion to multi-rooted 
versions of DGST, DCST, and DPST.
In fact, we are able to attain the approximation ratios that we get in 
Section~\ref{sec:applications} for the single-rooted case.
This is not surprising, as the algorithms for DGST, DCST, and DPST in trees 
can all be obtained through the corresponding min-density problems.
The tree embedding argument from Section~\ref{sec:emb} shows that one can 
reduce the min-density problem
in planar graphs to one on trees at the loss of an $O(\log N)$ factor in 
the approximation ratio. Moreover, the height of the resulting tree can be 
assumed to $O(\log N)$. For GST on trees with height $d$ there is an 
$O(d)$ approximation for the min-density problem \cite{ZosinK02}. 
Thus, there is an $O(\log^2 N)$-approximation for the min-density 
DGST in planar graphs. Combining the ingredients, we obtain an 
$O(\log k \log^2 N)$-approximation for the multi-root version of 
DGST in planar graphs. 
We obtain the same approximation factor for DCST in planar graphs, 
using a similar argument to that of Section \ref{sec:applications} 
to obtain an $O(\log^2N)$-approximation to the min-density DCST problem 
in planar graphs.
For DPST on trees, implicit in \cite{Calinescu_Zelikovsky_2005} is an 
algorithm that yields an 
$O(\frac{\log^{1+\eps}n}{\eps\log \log n})$-approximation for the 
min-density problem on trees. Combining it with the tree embedding and 
the iterative procedure, we obtain an 
$O(\frac{\log^{1+\eps}n\log k \log N}{\eps\log \log n})$-approximation 
for the multi-rooted version of DPST in planar graphs where $k = f(V)$.  
This proves Theorem~\ref{thm:multiroot}.

\bibliographystyle{plainurl}
\bibliography{references}
\appendix
\section{Proof of Theorem \ref{thm:planar_dir_tree_emb} (Tree Embeddings for Directed Planar Graphs)}
\label{appendix:sec:emb}

Fix a directed planar graph $G = (V, E)$ with edge costs $c: E \to \R_{\geq 0}$, 
root $r \in V$ and terminals $S \subseteq V$, and let $\gamma \leq c(E)$.
Let $((\calT = (V_T, E_T), c_T), r_T, M) = \algemb((G, c), r, S, \gamma)$. 
In this section, we will show that 
$(\calT, r_T, M)$ satisfies the properties guaranteed by Theorem 
\ref{thm:planar_dir_tree_emb}. Following the theorem statement, we let 
$n := |V|, k := |S|$.

We denote by $(H, \phi)$ the root of the tree given by $\algemb(H, r, S \cap H, \phi)$; 
we omit the terminals and root from this label as they are implicit given $H$. 
For any $(H, \phi) \in V_T$,
we let $\calT_{H, \phi}$ denote the subtree of $\calT$
rooted at $(H, \phi)$, and let $S_H := S \cap H$ be the set of terminals in 
the sub-instance $H$, and let $k_H := |S_H|$. 
We denote by $(H, \phi)^*$ the auxiliary node $v^*$ 
created in $\algemb(H, r, S_H, \phi)$. 

\subsection{Size and Height Bound}

We prove Properties \ref{thm:emb:size} and \ref{thm:emb:height} 
via simple inductive arguments; these mostly follow from the analysis in 
\cite{FriggstadM23} and are rewritten here for completeness.

\begin{lemma}\label{lem:tree_emb_num_copies}
	$|M(t)| \leq k\gamma$ for all $t \in S$.
\end{lemma}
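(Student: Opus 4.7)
The plan is to identify exactly where new copies of a fixed terminal $t$ are created in Algorithm \ref{algo:dir_plan_emb} and then bound the total by a short induction. Inspection of $\algemb((H, c|_H), r, S \cap H, \phi)$ shows that a new copy $t^P$ is added to $M(t)$ in exactly two situations: the base case $S \cap H = \{t\}$, which contributes one copy from the shortest $r$-$t$ path in $H$; and the separator case $t \in P$, which contributes one copy on the planar separator produced by $\algsep$. In every other case the copies of $t$ stored at this call come only from the recursive subcalls, namely the halving call $\algemb((H, c|_H), r, S \cap H, \phi/2)$ and the component calls $\algemb(C_i, r, (S \cap H) \cap C_i, \phi)$ over the weakly connected components $C_1,\dots,C_\ell$ of $H \setminus P$.

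The key combinatorial observation is that $t$ always belongs to the halving subinstance, while $t$ lies in exactly one component $C_i$ when $t \notin P$, and in none when $t \in P$. Writing $g(k, \phi)$ for the maximum value of $|M(t)|$ across all calls with $|S| \leq k$ and guess $\phi$, these observations yield the recurrence
\[
  g(k, \phi) \;\leq\; g(k, \phi/2) \;+\; \max\bigl(1,\ g(\lfloor k/2 \rfloor, \phi)\bigr),
\]
where the $\lfloor k/2 \rfloor$ comes from the separator guarantee of Lemma \ref{lem:dir_sep}, with boundary conditions $g(k, \phi) = 0$ whenever $\phi < 1$ or $k = 0$, and $g(1, \phi) \leq 1$ for $\phi \geq 1$.

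The only real step is to verify $g(k, \phi) \leq k\phi$ by induction (the recurrence is well-founded since every subcall strictly decreases either $\phi$ or $k$). The base cases are direct. For the inductive step with $k \geq 2$ and $\phi \geq 1$, the halving term is bounded by $g(k, \phi/2) \leq k\phi/2$ (valid even when $\phi/2 < 1$, since then the term is $0$); the component term satisfies $\max(1, g(\lfloor k/2 \rfloor, \phi)) \leq \max(1, \lfloor k/2 \rfloor \phi) \leq k\phi/2$, where the final inequality uses $k \geq 2$ and $\phi \geq 1$ to absorb the constant $1$. Summing gives $g(k, \phi) \leq k\phi$, and applying this to the top-level call yields $|M(t)| \leq k\gamma$. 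There is no significant obstacle; the only point worth noting is that within a single $\algemb$ call at most one new copy of $t$ is produced, since $t \in P$ precludes $t$ from appearing in any component $C_i$.
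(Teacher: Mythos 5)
Your proof is correct and follows essentially the same route as the paper's: the same recurrence $m_t(H,\phi) \le m_t(H,\phi/2) + \max(1, m_t(C_j,\phi))$ driven by the observation that $t\in P$ and $t\in C_j$ are mutually exclusive, and the same induction showing the solution is at most $k\phi$. Your handling of the boundary cases (e.g., the halving term vanishing when $\phi/2<1$) is if anything slightly more careful than the paper's write-up.
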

\begin{proof}
	Fix a terminal $t \in S$. For $(H, \phi) \in V_T$, we define $m_t(H, \phi)$ 
	to be the number of copies of $t$ in $\calT_{H, \phi}$; that is, $m_t(H, \phi) = 
	|M(t) \cap \calT_{H, \phi}|$. We will prove by induction on $\phi + k_H$ 
	that $m_t(H, \phi) \leq k_H \phi$.
	If $\phi < 1$ we return the empty tree, 
	so we start by considering $\phi = 1, k_H = 1$. Then 
	$\calT_{H, \phi}$ has two vertices $(H, \phi)$ and $t^P$ where $P$ is the 
	shortest $r$-$t$ path in $H$. Since $t^P \in M(t)$, $m_t(H, \phi) = 1$.
	
	Suppose $k_H \geq 2$. Let $(P, C_1, \dots, C_\ell) = \algsep(H, r, S_H, \phi)$,
	and let $k_i$ denote $|S_H \cap C_i|$ for $i \in [\ell]$. 
	We case on whether or not $t \in P$:
	
	\textbf{Case 1: $t \in P$}: Then, $t^P \in M(t) \cap \calT_{H, \phi}$, and
	$t \notin C_i$ for any $i \in [\ell]$, since $C_i \in G \setminus P$. Therefore, 
	$m_t(H, \phi) \leq 1 + m_t(H, \phi/2)$.
	
	\textbf{Case 2: $t \notin P$}: Since the $C_i$s are disjoint, there is exactly 
	one component $C_{j}$ that contains $t$. Thus $m_t(H, \phi) \leq 
	m_t(C_j', \phi) + m_t(H, \phi/2)$. 
	
	Combining the above cases and applying induction,
	\begin{align*}
			m_t(H, \phi, G) &\leq f_t(H, \phi/2) + \max(1, m_t(C_j, \phi)) 
			\leq k_H(\phi/2) + \max(1, k_{j}\phi) \\
			&\leq k_H(\phi/2) + (k_H/2)\phi \leq k_H\phi.
	\end{align*}
	The third inequality follows from the fact that $k_i \leq k_H/2$ for all 
	$C_i$ and that $k_H\phi/2 \geq 1$ when $\phi \geq 1, k_H \geq 2$. 
	This completes the inductive proof; thus $|M(t)| = m_t(G, \gamma) \leq k \gamma$.
	Since $t \in S$ was arbitrary, the claim holds for all terminals.
\end{proof}

\begin{lemma}
\label{lem:tree_emb_size}
	The number of vertices in $V_T$ is at most $O(k^3 \gamma)$.
\end{lemma}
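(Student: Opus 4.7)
The plan is to prove by structural induction on the recursion of $\algemb$ that $n_V(H, \phi) \leq 2 k_H^3 \phi$, where $n_V(H, \phi)$ denotes the number of vertices in the subtree $\calT_{H, \phi}$ and $k_H := |S_H|$. Instantiating this at the top-level call $(H, \phi) = (G, \gamma)$ then yields $|V_T| \leq 2 k^3 \gamma = O(k^3 \gamma)$.

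First I would read off the recurrence directly from Algorithm~\ref{algo:dir_plan_emb}. In the base case $|S_H| = 1$ the call produces exactly two vertices, namely $r_T$ and one terminal copy $t^P$. In the recursive case we introduce the two auxiliary vertices $r_T$ and $v^*$, plus one terminal copy for every $t \in S_H \cap P$, giving
\[
  n_V(H, \phi) \;\leq\; 2 + |S_H \cap P| + n_V(H, \phi/2) + \sum_{i=1}^{\ell} n_V(C_i, \phi),
\]
where $(P, C_1, \ldots, C_\ell) = \algsep(H, r, S_H, \phi)$ and $k_i := |S_H \cap C_i|$.

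For the inductive step, the crux is the power-sum estimate $\sum_i k_i^3 \leq k_H^3/4$, which follows from Lemma~\ref{lem:dir_sep}: each component contains at most half of the total terminal weight, so $k_i \leq k_H / 2$, and disjointness of the components gives $\sum_i k_i \leq k_H$; combining, $\sum_i k_i^3 \leq (\max_i k_i)^2 \sum_i k_i \leq (k_H/2)^2 \cdot k_H = k_H^3/4$. Plugging in the inductive hypotheses $n_V(H, \phi/2) \leq k_H^3 \phi$ and $\sum_i n_V(C_i, \phi) \leq 2\phi \sum_i k_i^3 \leq k_H^3 \phi / 2$, together with $|S_H \cap P| \leq k_H$, gives $n_V(H, \phi) \leq 2 + k_H + \tfrac{3}{2} k_H^3 \phi$. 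The desired bound $n_V(H, \phi) \leq 2 k_H^3 \phi$ then reduces to $k_H^3 \phi / 2 \geq 2 + k_H$, which holds for all $k_H \geq 2$ and $\phi \geq 1$ (and is tight at $k_H = 2, \phi = 1$, where both sides equal $4$).

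The main subtlety is ensuring that the induction is well-founded, since the halving branch strictly decreases $\phi$ while the separator branch strictly decreases $k_H$ but keeps $\phi$ fixed. A convenient choice is to induct on the recursion depth, which is bounded by $O(\log(k\gamma))$ by the analysis used for Property~\ref{thm:emb:height}; alternatively, one can induct on the measure $k_H + \lfloor \log \phi \rfloor$. Beyond this the argument is a mechanical calculation whose only tight ingredient is the cubic power-sum bound, itself a direct consequence of the balanced-separator guarantee of Lemma~\ref{lem:dir_sep}. As a sanity check, the same approach with $k_H^2$ in place of $k_H^3$ actually delivers the sharper $O(k^2 \gamma)$ bound, but the $O(k^3 \gamma)$ statement of the lemma provides comfortable slack.
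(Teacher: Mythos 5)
Your proof is correct and takes essentially the same route as the paper: the same induction with the same cubic power-sum estimate $\sum_i k_i^3 \le (k_H/2)^2\sum_i k_i \le k_H^3/4$; the only (cosmetic) difference is that the paper counts the non-copy vertices by induction and adds the terminal copies separately via Lemma~\ref{lem:tree_emb_num_copies}, whereas you fold the $|S_H\cap P|\le k_H$ term into a single recurrence. One caveat on your closing ``sanity check'': the same induction with $k_H^2$ in place of $k_H^3$ does \emph{not} close, since then $\sum_i k_i^2 \le (k_H/2)\sum_i k_i \le k_H^2/2$ makes the two recursive branches already consume the entire budget $Ck_H^2\phi$, leaving no slack to absorb the additive $2+k_H$; the cubic exponent is exactly what buys the factor-$1/4$ savings your argument relies on.
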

\begin{proof}
	For $(H, \phi) \in V_T$, we define $f(H, \phi)$ be the number of vertices in the subtree 
	$T_{H, \phi}$ not including copies of terminals; that is, $f(H, \phi) = 
	|V(\calT_{H, \phi}) \setminus \cup_{t \in S} M(t)|$. 
	We will prove by induction on $\phi + k_H$ that $f(H, \phi) \leq k_H^3 \phi$.
	If $\phi < 1$ we return the empty tree, 
	so we start by considering $\phi = 1, k_H = 1$. Then 
	$\calT_{H, \phi}$ has two vertices $(H, \phi)$ and $t^P$ where $P$ is the 
	shortest $r$-$t$ path in $H$. Since $t^P \in M(t)$, $f(H, \phi) = 1$.
	
	Suppose $k_H \geq 2$. Let $(P, C_1, \dots, C_\ell) = \algsep(H, r, S_H, \phi)$,
	and let $k_i$ denote $|S_H \cap C_i|$ for $i \in [\ell]$. 
	Notice that $\calT_{H, \phi}$ consists of 
	$\calT_{H, \phi/2}$, $\calT_{C_i, \phi}$ for $i \in [\ell]$, 
	some nodes in $\cup_{t \in S} M(t)$, and two additional nodes $(H, \phi)$ and 
	$(H, \phi)^*$. By induction,
	\begin{align*}
			f(H, \phi) &= 2 + f(H, \phi/2) + \sum_{i \in [\ell]} f(C_i, \phi) 
			\leq 2 + k_H^3(\phi/2) + \sum_{i \in [\ell]} k_i^3 \phi \\
			&\leq 2 + k_H^3(\phi/2) + (k_H/2)^2\phi \sum_{i \in [\ell]} k_i 
			\leq 2 + k_H^3 \phi/2 + k_H^3\phi /4 \\
			&\leq k_H^3 \phi + \left[2 - \frac {k_H^3 \phi}4 \right]
			\leq k_H^3 \phi.
	\end{align*}
	The last inequality follows from the fact that $k_H \geq 2, \phi \geq 1$ implies 
	$\frac 1 4 k_H^3 \phi \geq 2$. This concludes the inductive proof.
	
	We obtain the bound on the number of vertices by in $\calT$ by combining the above with 
	Lemma \ref{lem:tree_emb_num_copies}, since $|V_T| = 
	|V_T \setminus \cup_{t \in S} M(t)| + |\cup_{t \in S} M(t)| = 
	f(G, \gamma) + \sum_{t \in S} |M(t)| \leq 
	k^3 \gamma + k^2 \gamma = O(k^3 \gamma)$.
\end{proof}

\begin{lemma}\label{lem:tree_emb_height}
	The height of $\calT$ is at most $O(\log(k\gamma))$.
\end{lemma}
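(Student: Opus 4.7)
The plan is to prove the height bound by induction on $k_H + \lceil \log_2 \phi \rceil$, following the recursive structure of $\algemb$ directly. For any node $(H, \phi) \in V_T$, let $h(H, \phi)$ denote the height of $\calT_{H, \phi}$. First I would dispatch the trivial base cases: when $\phi < 1$ or $S_H = \emptyset$, the algorithm returns null on line \ref{algemb:line:bc0} so $h(H,\phi) = 0$; when $|S_H| = 1$, the subtree consists of $r_T$ and a single child $t^P$, giving $h(H,\phi) = 1$.

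Next I would read off the recurrence from the algorithm. In the recursive case, the root $r_T$ of $\calT_{H,\phi}$ has exactly two children: the root of $\calT_{H, \phi/2}$ (attached via $\addtree$ in the halving branch) and the auxiliary vertex $(H,\phi)^*$. The vertex $(H,\phi)^*$ in turn has two types of children: the roots of each subtree $\calT_{C_i, \phi}$ for $i \in [\ell]$, and the terminal copies $t^P$ for $t \in S \cap P$. Thus the longest root-to-leaf path in $\calT_{H,\phi}$ satisfies
\[
    h(H, \phi) \;\leq\; \max\bigl\{\, 1 + h(H, \phi/2),\; 2 + \max_{i \in [\ell]} h(C_i, \phi),\; 2 \,\bigr\}.
\]

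The two key facts that drive the bound are: (i) in the halving branch the guess parameter drops from $\phi$ to $\phi/2$ while $k_H$ is unchanged, and (ii) by Lemma \ref{lem:dir_sep} applied inside $\algsep$, each component $C_i$ satisfies $|S \cap C_i| \leq k_H/2$. Together these guarantee that every root-to-leaf path in $\calT$ can cross at most $\lceil \log_2 \gamma \rceil$ halving edges and at most $\lceil \log_2 k \rceil$ separator edges before reaching a base case, since each such step strictly decreases the corresponding parameter by a factor of two.

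To formalize, I would prove by induction the concrete inequality $h(H, \phi) \leq 2(\log_2 k_H + \log_2 \phi + 1)$. The base cases are immediate from the discussion above. For the inductive step, both subproblems $(H, \phi/2)$ and $(C_i, \phi)$ satisfy a strictly smaller value of $k_H + \lceil \log_2 \phi \rceil$, so the induction hypothesis bounds each by $2(\log_2 k_H + \log_2 \phi)$; adding the $+2$ from the recurrence yields the desired bound. Substituting $H = G$ and $\phi = \gamma$ gives $h(G, \gamma) = O(\log(k\gamma))$, which is the claim. I do not foresee a real obstacle: the argument is a direct recursion-depth calculation, and the only mild care needed is handling the constant additive overhead from the $v^*$ layer so that it is absorbed into the $O(\cdot)$.
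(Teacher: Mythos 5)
Your proof is correct and follows essentially the same route as the paper's: the same recurrence on the height of $\calT_{H,\phi}$ (the paper uses the slightly coarser $2+\max(h(H,\phi/2),\max_i h(C_i,\phi))$), the same inductive bound $h(H,\phi)\le 2\log_2(k_H\phi)+2$, and the same two driving facts that $\phi$ halves in one branch and $k_H$ halves in the other. The only differences are cosmetic (your integer-valued induction measure $k_H+\lceil\log_2\phi\rceil$ versus the paper's $\phi+k_H$, and a more precise treatment of the base case), so no changes are needed.
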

\begin{proof}
	For $(H, \phi) \in V_T$, we define $h(H, \phi)$ be the height of the subtree 
	$T_{H, \phi}$. 
	We will prove by induction on $\phi + k_H$ that $h(H, \phi) \leq 2\log(k_H\phi) + 2$.
	If $\phi < 1$ we return the empty tree, and if $\phi = 1, k_H = 1$, then 
	$\calT_{H, \phi}$ has height 2.
	
	Let $(P, C_1, \dots, C_\ell) = \algsep(H, r, S_H, \phi)$,
	and let $k_i$ denote $|S_H \cap C_i|$ for $i \in [\ell]$. Notice that $h(H, \phi) 
	= 2 + \max(h(H, \phi/2), \max_{i \in [\ell]} h(C_i, \phi))$; the 
	2 comes from the initial path $(H, \phi), (H, \phi)^*$. By induction,
	\begin{align*}
			h(H, \phi) 
			&= 2 + \max(h(H, \phi/2), \max_{i \in [\ell]} h(C_i, \phi)) \\
			&\leq 2 + \max(2\log(k_H(\phi/2))+2, 2\log(k_i \phi) + 2) \\
			&\leq 2 + \max(2\log(k_H(\phi/2))+2, 2\log((k_H/2)\phi) + 2) \\
			&= 4 + 2\log(k_H \phi/2) = 4 + 2(\log(k_H\phi) - 1) = 2\log(k_H\phi) + 2.
	\end{align*}
	Thus the height of $\calT$ is $h(G, \gamma) \leq 2\log(k\gamma)+2$.
\end{proof}

\subsection{Projection from Graph to Tree}

We prove Property \ref{thm:emb:graph-to-tree} by showing that we can effectively 
embed a subtree $G' \subseteq G$ to a 
subtree of $\calT$ with the ``same'' terminal set. 
The idea is to follow the tree path from $r_T = (G, \gamma)$ to 
$(G, \phi)$, where $\phi$ is a close approximation to $c(G')$, and then buy the 
planar separator and recurse. The cost analysis in Lemma \ref{lem:proj_graph_to_tree} 
follows directly from the analysis in \cite{FriggstadM23}, rewritten here for completeness.

\begin{lemma}\label{lem:proj_graph_to_tree}
	For any $r$-tree $G' \subseteq G$ with $c(G') \leq \gamma$, there exists a 
	$r_T$-tree $T' \subseteq \calT$ with $c_T(T') = O(\log k) c(G')$, in which 
	for each terminal $t \in S \cap G'$, $M(t) \cap T' \neq \emptyset$. 
\end{lemma}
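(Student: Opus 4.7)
The plan is to construct $T' \subseteq \calT$ by induction, mimicking the top-down recursive structure of $\algemb$: I would descend the halving chain in $\calT$ until the ``guess'' parameter closely matches $c(G')$, take the separator branch there, and then recurse into the components carrying terminals of $G'$.

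Concretely, choose $i^* = \lfloor \log(\gamma/c(G')) \rfloor$, so $c(G') \le \gamma/2^{i^*} \le 2\,c(G')$. First I would include in $T'$ the $i^*$ zero-cost halving edges from $r_T$ down to the root of $\calT_{G, \gamma/2^{i^*}}$. From that root I would take the separator branch, the edge to $v^*$ of cost $c(P)$. Since every vertex in the $r$-tree $G'$ satisfies $d_G(r,v) \le c(G') \le \gamma/2^{i^*}$, the pruning step of $\algsep$ deletes no vertex of $G'$; Lemma \ref{lem:dir_sep} then produces $P$ as three shortest dipaths from $r$ in the pruned graph, each of length at most $\gamma/2^{i^*}$, so $c(P) \le 3\gamma/2^{i^*} \le 6\,c(G')$. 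For each $t \in (S \cap G') \cap P$ I would add the zero-cost edge $(v^*, t^P)$, which places $t^P$ into $M(t) \cap T'$. For each component $C_j$ of the pruned graph minus $P$ that contains a terminal of $S \cap G' \setminus P$, I would recurse on $\calT_{C_j, \gamma/2^{i^*}}$ with witness $G'_j$ defined as the edges of $G'$ with both endpoints in $C_j'$ together with the edges of $G'$ going from $P$ to $C_j'$ (which become edges out of $r$ after $P$ is contracted into $r$). The resulting sub-tree is attached to $T'$ via the zero-cost edge from $v^*$ to the root of $\calT_{C_j, \gamma/2^{i^*}}$. The base case is $|S \cap C| = 1$: $\calT_{C,\cdot}$ then consists of a single edge from its local root to $t^P$ of cost equal to the shortest $r$-$t$ path in $C$, which is at most the local $c(G'_j)$ because $G'_j$ contains an $r$-$t$ path.

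For the cost analysis I would proceed depth by depth. At any fixed recursion depth, the witnesses $G'_j$ across active sub-problems are edge-disjoint subforests of $G'$ (with edges already absorbed into shallower separators omitted), so $\sum_j c(G'_j) \le c(G')$, and the separators bought at that depth cost at most $6\,c(G')$ in aggregate. Since $\algsep$ halves $|S \cap H|$ at each separator application, the recursion has depth $O(\log k)$, so total separator cost is $O(\log k)\cdot c(G')$; the base-case shortest-path contributions also telescope to at most $c(G')$. Terminal coverage follows by a parallel induction: each $t \in S \cap G'$ is either placed into some separator $P$ (where we add $t^P$) or lies in the unique recursed component and is eventually placed at its base case.

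The main obstacle I anticipate is the combinatorial bookkeeping around contracting $P$ into $r$ in each sub-instance. I need to verify that $G'_j$ is genuinely an $r$-tree in the contracted $C_j$ (uniqueness of $r$-paths uses that $G'$ is an $r$-tree and that $C_j'$ is a weakly connected component of $G \setminus P$), that $\sum_j c(G'_j) \le c(G') - c(G' \cap P)$ holds without double-counting edges between components, and that $G'_j$ captures exactly the terminals $S \cap G' \cap C_j'$. Once this is in place, the cost and coverage analyses are essentially the same induction as the one in \cite{FriggstadM23}.
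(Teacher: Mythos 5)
Your proposal is correct and follows essentially the same route as the paper's proof: descend the zero-cost halving chain until the guess is within a factor of $2$ of $c(G')$, buy the separator branch at cost at most $6\,c(G')$, and recurse on the components with the witness restricted to each component; the paper organizes the cost bound as an induction on the number of terminals (yielding $6(\log k_H+1)c(H')$) rather than your per-depth aggregation, but the accounting is the same. The bookkeeping you flag about the contracted witnesses being $r$-trees and edge-disjoint is handled implicitly in the paper by writing the witness as $H'\cap C_i$ and using disjointness of the components, exactly as you anticipate.
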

\begin{proof}
	We will prove the following by induction on $k_H$: 
	Let $(H, \phi_0) \in V_T$ and $H' \subseteq H$ be any 
	$r$-tree with $c(H') \leq \phi_0$. 
	Then, there exists a subtree 
	$T' \subseteq \calT_{H, \phi_0}$ rooted at $(H, \phi_0)$ 
	with $c_T(T') \leq 6(\log k_H + 1) c(H')$. Furthermore, for all 
	terminals $t \in S \cap H'$, $M(t) \cap T' \neq \emptyset$.
	
	If $k_H = 1$,
	$\calT_{H, \phi_0}$ has two vertices $(H, \phi_0)$ and $t^P$ connected 
	by an edge of cost $c(P)$, where $t$ is the unique terminal in $S_H$ and $P$ is the 
	shortest $r$-$t$ path in $H$. If $t \in H'$, then 
	$c_T(\calT_{H, \phi_0}) = c(P) \leq c(H')$, since $H'$ contains both $r$ and $t$.
	Furthermore, $t$ is the only terminal in $S \cap H'$, and 
	$t^P \in M(t) \cap \calT_{H, \phi_0}$. Thus $T' = \calT_{H, \phi_0}$ satisfies 
	the properties stated in the lemma. If $t \notin H'$ then $H'$ has no terminals, 
	so the singleton tree $\{r\}$ satisfies the stated properties.
	
	Assume $k_H \geq 2$. Let $\phi \leq \phi_0$ such that 
	$\phi/2 \leq c(H') \leq \phi$ and $(H, \phi) \in V_T$. Let $T'$ be the 
	subtree of $T_{H, \phi_0}$ consisting of the path from 
	$(H, \phi_0)$ to $(H, \phi)$ and the edge $((H, \phi), (H, \phi)^*)$.
	Let $(P, C_1, \dots, C_\ell) = \algsep(H, r, S_H, \phi)$,
	and let $k_i$ denote $|S_H \cap C_i|$ for $i \in [\ell]$. 
	Notice that for all $i \in [\ell]$, $k_i \leq k_H/2$ and $\phi 
	\geq c(H') \geq c(H' \cap C_i)$. Therefore, for each $i \in [\ell]$,
	we can apply induction on $(C_i, \phi)$ and $H' \cap C_i \subseteq C_i$ 
	to obtain a subtree $T_i' \subseteq \calT_{C_i, \phi}$ rooted at $(C_i, \phi)$. 
	We include each $T_i'$ in $T'$, along with the edges 
	$((H, \phi)^*, (C_i, \phi))$, $i \in [\ell]$. We also include in $T'$ all edges 
	$((H, \phi)^*, t^{P})$ for terminals $t \in P$.

	\textbf{Cost:} Note that $c(P) \leq 3\phi$; this is because we remove all vertices with 
	distance $\geq \phi$ and $P$ is the union of three shortest paths in the resulting graph.
	By choice of $\phi$, this implies that $c(P) \leq 6c(H')$. 
	Next, notice that $c_T(T') = c(P) + \sum_{i \in [\ell]} c_T(T_i')$, 
	since the only non-zero cost edge in $T' \setminus \cup_{i \in [\ell]} T_i'$ is 
	$((H, \phi), (H, \phi)^*)$, which has cost $c(P)$. By induction,
	\begin{align*}
	c_T(T') &= c(P) + \sum_{i \in [\ell]} c_T(T_i')
	\leq 6c(H') + \sum_{i \in [\ell]} 6(\log k_i+1)c(H' \cap C_i) \\
	&\leq 6c(H') + \sum_{i \in [\ell]}6 (\log k_H) c(H' \cap C_i) 
	\leq 6c(H') + 6 \log k_H c(H').
	\end{align*}
	Here, the second inequality follows from the fact that $C_i$ has at most 
	half the number of terminals of $H$, so $\log k_{i} \leq \log k_H - 1$,
	and the third inequality follows from the fact that all $C_i$s are disjoint.

	\textbf{Terminals:} Let $t \in S \cap H'$. Since $H' \subseteq H$ and 
	$H \subseteq \cup_{i = 1}^\ell C_i \cup P$, $t$ must either be 
	in $H' \cap C_i$ for some $i \in [\ell]$ or in $P$. If $t$ is in 
	$H' \cap C_i$, then by induction, $M(t) \cap T_i' \neq \emptyset$, so 
	$M(t) \cap T' \neq \emptyset$. Else, $t \in P$, in which case $t^{P} \in M(t) \cap T'$.

	This concludes the inductive argument and the proof of the lemma. 
\end{proof}

\subsection{Projection from Tree to Graph}

In this section we will prove Lemma \ref{lem:proj_tree_to_graph} which implies Property 
\ref{thm:emb:tree-to-graph}.

\begin{lemma}
\label{lem:proj_tree_to_graph}
For any $r_T$-tree $T' \subseteq \calT$ there exists a $r$-tree $G' \subseteq G$ 
with $c(G') \leq c_T(T')$. Furthermore, for each 
terminal $t \in S$, if $M(t) \cap T' \neq \emptyset$ then $t \in G'$. 
\end{lemma}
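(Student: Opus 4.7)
I will prove Lemma~\ref{lem:proj_tree_to_graph} by strong induction on the recursive structure of $\algemb$, using the strengthened claim that for every node $(H,\phi) \in V_T$ and every $(H,\phi)$-rooted out-subtree $T'_{H,\phi} \subseteq \calT_{H,\phi}$, there exists an $r$-rooted out-tree $H' \subseteq H$ with $c(H') \leq c_T(T'_{H,\phi})$ such that every terminal $t \in S_H$ with $M(t) \cap T'_{H,\phi} \neq \emptyset$ lies in $V(H')$. Instantiating this at $(G,\gamma)$ with $T'_{G,\gamma} = T'$ recovers the lemma, and efficiency is clear because the construction is just a traversal of $T'$ that outputs a union of paths and separators already computed by $\algemb$.

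For the base case $|S_H| \leq 1$, $T'_{H,\phi}$ is either trivial or contains the single non-zero cost edge $((H,\phi), t^P)$, and in the latter case I take $H' = P$, which is an $r$-rooted path in $H$ of cost $c(P)$ containing $t$. For the inductive step at an internal node $(H,\phi)$, the edges of $T'_{H,\phi}$ at $(H,\phi)$ may include the zero-cost edge into $\calT_{H,\phi/2}$ (halving branch) and/or the edge of cost $c(P)$ into $(H,\phi)^*$ (separator branch), where $P$ is the separator returned by $\algsep(H,r,S_H,\phi)$. I apply induction to the halving branch to obtain $H'_h \subseteq H$ (same underlying graph), and to each $T' \cap \calT_{C_i,\phi}$ below the separator branch to obtain $H'_i \subseteq C_i$. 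Since $C_i = G_P[C_i' \cup \{r\}]$ with $G_P$ obtained from $H$ by contracting $P$ into $r$, each edge of $H'_i$ leaving the contracted root corresponds to a specific edge of $H$ from a vertex of $P$ into $C_i'$; uncontracting these edges embeds $H'_i$ as a subgraph $\widetilde{H}'_i \subseteq H$. I then define $H'$ to be a BFS out-tree of $r$ in $H'_h \cup P \cup \bigcup_i \widetilde{H}'_i$.

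The cost bound is immediate by summation: $c(H') \leq c(H'_h) + c(P) + \sum_i c(\widetilde{H}'_i) \leq c_T(T'_{H,\phi})$. Reachability from $r$ holds because $P$ is a union of three $r$-rooted shortest dipaths in $H$, each $\widetilde{H}'_i$ attaches to $P$ since its uncontracted ``root edges'' emanate from vertices of $P$, and $H'_h$ is $r$-reachable by induction, so the BFS out-tree in the union covers every vertex of the union. For terminal coverage, if $M(t)$ meets $T'_{H,\phi}$, the unique directed $T'_{H,\phi}$-path from $(H,\phi)$ to a copy $t^Q$ either passes through the halving edge (giving $t \in V(H'_h)$ by induction) or the separator edge, in which case either $Q = P$ with $t \in P \subseteq V(H')$, or $t^Q$ lies inside some $\calT_{C_i,\phi}$ with $t \in V(\widetilde{H}'_i)$ by induction. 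The main obstacle is the careful formalization of the uncontract operation, ensuring that inductively constructed out-trees on the sub-instances $C_i$ glue correctly onto $P$ in the uncontracted graph $H$; once this bookkeeping is set up, the cost, reachability, and coverage verifications are routine.
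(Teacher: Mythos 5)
Your proposal is correct and matches the paper's argument in substance: the constructed subgraph is exactly the union of the separators and base-case shortest paths attached to the non-zero-cost edges of $T'$, the cost bound is by summation, connectivity follows because each sub-instance's edges uncontract onto the parent separator (the paper tracks this via the contracted vertex sets $R_H$), and terminal coverage is the same case analysis. The only difference is organizational — you package everything into one strengthened structural induction, while the paper defines $G'$ globally and proves cost, terminal coverage, and reachability as three separate claims — so this is essentially the same proof.
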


Fix some $r_T$-tree $T' \subseteq \calT$. Notice that there are only 
two types of non-zero cost edges in $\calT$; if $c_T(e) > 0$, then either 
\begin{enumerate}[(1)]
	\item $e = ((H, \phi), (H, \phi)^*)$ and $c_T(e) = c(P)$, where $P$ is the separator given 
	by \\$\algsep(H, r, S_H, \phi)$, or 
	\item $e = ((H, \phi), t^P)$ and $c_T(e) = c(P)$, where $t$ is the only terminal 
	in $S_H$ and $P$ is the shortest $r$-$t$ path in $H$. 
\end{enumerate}
Let $\calP(T')$ be the set of all such paths and separators corresponding to 
non-zero cost edges of $T'$. We let $G' = \cup_{P \in \calP(T')} P$.
It is clear that $G'$ can be computed efficiently by traversing $T'$ and 
including all relevant paths.

\begin{claim}
\label{claim:proj_cost}
$c(G') \leq c_T(T')$.
\end{claim}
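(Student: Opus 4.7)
The plan is to exploit the very direct correspondence between non-zero-cost edges of $\calT$ and the paths/separators that $G'$ is built from. Every edge $e \in E_T$ with $c_T(e) > 0$ is created by exactly one invocation of $\algemb$ on some subinstance $(H,\phi)$, and its cost is set to be precisely $c(P_e)$ for the associated separator $P_e = P_1\cup P_2\cup P_3$ returned by $\algsep$ (for interior nodes) or the associated shortest $r$-$t$ path $P_e$ (for base-case nodes). In particular, distinct non-zero-cost edges of $\calT$ are assigned to distinct paths/separators in $G$, because the invocations live at distinct nodes of the recursion, so this assignment is an injection. Restricting to $T' \subseteq \calT$, this injection identifies the non-zero-cost edges of $T'$ bijectively with the members of $\calP(T')$.

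Given this bijection, I would compute
\[
c_T(T') \;=\; \sum_{\substack{e\in E(T')\\ c_T(e)>0}} c_T(e) \;=\; \sum_{P\in \calP(T')} c(P).
\]
On the graph side, since $G' = \bigcup_{P\in \calP(T')} P$, subadditivity of edge-cost under union gives
\[
c(G') \;=\; c\!\left(\bigcup_{P\in \calP(T')} P\right) \;\le\; \sum_{P\in \calP(T')} c(P),
\]
with the inequality being (possibly) strict only when different separators/shortest paths share edges in $G$. Chaining the two displays yields $c(G') \le c_T(T')$, which is exactly the claim.

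The only technical point that needs a careful check is that the map $e \mapsto P_e$ is well-defined on the non-zero-cost edges of $T'$ and yields all of $\calP(T')$ without double-counting, i.e.\ no single separator is charged twice. This is immediate from the construction of Algorithm~\ref{algo:dir_plan_emb}: at each recursive call on $(H,\phi)$ the algorithm adds at most one separator edge $((H,\phi),(H,\phi)^*)$ of cost $c(P)$ and, in the base case $|S_H|=1$, at most one edge of the form $((H,\phi),t^P)$ of cost $c(P)$; all other edges introduced in that call ($e^*$ aside, the edges to terminal copies $t^P$ for $t \in S\cap P$ as well as the tree-merging edges used by $\addtree$) are defined to have cost zero and thus contribute nothing to either side. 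Hence the overall proof is essentially bookkeeping, and I do not foresee any real obstacle beyond stating the bijection cleanly.
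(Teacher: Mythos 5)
Your proposal is correct and follows essentially the same route as the paper: equate $c_T(T')$ with $\sum_{P\in\calP(T')}c(P)$ via the correspondence between non-zero-cost tree edges and their defining separators/shortest paths, then bound $c(G')=c(\bigcup_{P\in\calP(T')}P)$ by that same sum using subadditivity of cost under union. The extra care you take about injectivity of the edge-to-path assignment is harmless but not needed for the inequality (even with repeats, summing over the set $\calP(T')$ only undercounts relative to $c_T(T')$, which still gives the bound).
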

\begin{proof}
	By definition of $\calP(T')$, $c_T(T') = \sum_{P \in \calP(T')} c(P)$. By construction
	of $G'$, $c(G') = c(\cup_{P \in \calP(T')} P) \leq \sum_{P \in \calP(T')} c(P)$.
\end{proof}

\begin{claim}
\label{claim:proj_terminals}
For each $t \in S$, if $M(t) \cap T' \neq \emptyset$ then $t \in G'$.
\end{claim}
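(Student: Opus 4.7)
The plan is to trace the provenance of any copy $t^P \in M(t) \cap T'$ back to the two places in Algorithm \ref{algo:dir_plan_emb} where copies of a terminal can be introduced, and then read off a non-zero-cost ancestor edge of $t^P$ in $T'$ that forces $P$ into $\calP(T')$. Since each such $P$ contains $t$ by construction, this immediately gives $t \in G'$.

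First I would enumerate the two creation sites for copies during a recursive call $\algemb(H, r, S_H, \phi)$. In the base case $|S_H| = 1$, the vertex $t^P$ is added with $P$ the shortest $r$-to-$t$ path in $H$, and the incoming edge from the current root $(H, \phi)$ to $t^P$ is assigned cost $c(P)$. In the separator loop starting at line \ref{algemb:line:terminal-start}, a copy $t^P$ is added for $P$ the separator returned by $\algsep(H, r, S_H, \phi)$, with the edge $(v^*, t^P)$ having cost $0$; however, the assignment performed immediately before the recursive calls on the $C_i$ sets $c_T((H, \phi), v^*) = c(P)$ for the same separator $P$.

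Next, I would fix any $t \in S$ with $M(t) \cap T' \neq \emptyset$, pick a copy $t^P \in M(t) \cap T'$, and consider the unique directed path in $T'$ from $r_T$ to $t^P$, which exists because $T'$ is an $r_T$-tree. In the base-case creation, this path uses the edge of cost $c(P)$ directly into $t^P$, so $P \in \calP(T')$. In the separator-case creation, the path must descend through the edge $((H, \phi), (H, \phi)^*)$ of cost $c(P)$ before the zero-cost edge into $t^P$, so again $P \in \calP(T')$. Either way, since $t \in P$ by construction of the copy, we get $t \in P \subseteq \bigcup_{Q \in \calP(T')} Q = G'$, as desired.

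The only subtle point, and the one I would spell out carefully, is the separator case: the edge directly feeding $t^P$ has cost zero, so one might naively worry that $P$ fails to appear in $\calP(T')$. The saving observation is that $t^P$ is a child of the auxiliary node $v^* = (H, \phi)^*$, whose unique incoming edge from its parent $(H, \phi)$ carries cost $c(P)$; connectivity of $T'$ to $r_T$ forces this edge into $T'$. No combinatorial or cost estimates are required, so once the correspondence between each copy and its generating path or separator is made explicit, the claim follows.
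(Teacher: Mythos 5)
Your proof is correct and follows essentially the same route as the paper's: identify the two creation sites for a copy $t^P$, observe that the relevant non-zero-cost edge (the base-case edge into $t^P$, or the edge $((H,\phi),(H,\phi)^*)$ in the separator case) lies on the unique $r_T$-to-$t^P$ path in $T'$, hence $P \in \calP(T')$ and $t \in P \subseteq G'$. The "subtle point" you flag about the zero-cost edge feeding $t^P$ in the separator case is exactly the step the paper handles the same way.
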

\begin{proof}
	Fix $t \in S$ such that $M(t) \cap T' \neq \emptyset$. This means (by construction 
	of $M(t)$) that there exists some $t^P \in T'$, where $t \in P$ and $P$ is either a planar separator 
	or a shortest path. We will show that $P \in \calP(T')$, which implies $P \subseteq G'$. 
	
	First, suppose $P$ is a separator given by $\algsep(H, r, S_H, \phi)$ for some $(H, \phi) \in V_T$. 
	Then the edge $e^* = ((H, \phi), (H, \phi))$ must be in $T'$, since $T'$ is rooted 
	at $r_T$ and $e^*$ is on the unique tree path from $r_T$ to $t^P$. Since $e^* \in T'$, 
	$P \in \calP(T')$ so $P \subseteq G'$.

	Suppose instead that $P$ is a shortest path $r$-$t$ path computed in the base case of \\
	$\algemb(H, r_H, S_H, \phi)$ in which $S_H = \{t\}$. Then the edge 
	$e_t = ((H, \phi), t^P)$ must be in $T'$ since it is on the unique tree path from $r_T$ 
	to $t^P$, so once again $P \subseteq G'$.
\end{proof}

\begin{claim}
\label{claim:proj_subtree}
$G' \subseteq G$ is a directed out-tree rooted at $r$.
\end{claim}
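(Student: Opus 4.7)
The plan is to prove the claim by induction on the structure of $T'$, after giving a well-defined interpretation of each $P \in \calP(T')$ as a subgraph of the original graph $G$ (not of the contracted sub-instance in which $P$ was computed). Specifically, each $P$ arises inside some sub-instance $(H, \phi) \in V_T$, where $H$ is obtained from $G$ by contracting into $r$ a nested sequence of ancestor separators $P^{(1)}, P^{(2)}, \dots, P^{(j)}$ encountered along the $r_T$-to-$(H, \phi)$ path in $\calT$. Every edge of $H$ canonically corresponds to an edge of $G$, and any shortest dipath from the contracted root in $H$ corresponds in $G$ to a path starting at some vertex of $P^{(1)} \cup \cdots \cup P^{(j)}$. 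I would interpret $P$ as this un-contracted subgraph of $G$, and define $G' = \bigcup_{P \in \calP(T')} P$ under this interpretation.

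With this setup in place, I would induct on the height (or size) of $T'$ to show that every vertex of $G'$ is reachable from $r$ inside $G'$. The base case $T' = \{r_T\}$ is immediate since then $\calP(T') = \emptyset$ and $G' = \{r\}$. For the inductive step, if the outgoing edge from the root of the current $T'$ to its star node $(H, \phi)^*$ is present, the separator $P$ added here consists of three shortest dipaths from the (possibly contracted) root, each of whose un-contracted endpoints lie in the ancestral separator union, which is already reachable from $r$ in $G'$ by a higher-level inductive instance. For the child sub-instances $(C_i, \phi)$ and $(H, \phi/2)$, the inductive hypothesis says each contributes a subgraph in which every vertex is reachable from the corresponding sub-root; un-contracting, the sub-root is identified with a vertex on an ancestral separator already in $G'$, so reachability in $G'$ propagates. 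The crucial disjointness property is that the components $C_i$ are pairwise vertex-disjoint in $G$ (being weakly connected components of $G \setminus P$), so different $(C_i, \phi)$ subcalls cannot interfere except through the separator attachment points, which already belong to $G'$.

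Having established that $r \in G'$ and every vertex of $G'$ is reachable from $r$, I would conclude by extracting a spanning out-arborescence of $G'$ rooted at $r$ (e.g., by keeping one incoming edge per non-root vertex along a shortest-path tree within $G'$). This yields an $r$-rooted out-tree of cost at most $c(G') \leq c_T(T')$ by Claim~\ref{claim:proj_cost}, and it is easy to compute in polynomial time. Combined with Claims~\ref{claim:proj_cost} and~\ref{claim:proj_terminals}, this finishes the proof of Lemma~\ref{lem:proj_tree_to_graph}.

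The main obstacle I anticipate is making the un-contraction fully rigorous, especially when both the halving branch $(H, \phi/2)$ and the separator branch appear simultaneously as children of a node in $T'$: then both contribute subgraphs living in the same underlying graph $H$, so their un-contracted images may share vertices or edges and the naive union may not be a tree. Taking a spanning out-arborescence at the end side-steps this issue without changing the cost bound, but one must be careful that all paths contributed by the recursion are genuinely emanating from $r$ in $G$ so that such an arborescence exists.
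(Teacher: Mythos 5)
Your proposal is correct and follows essentially the same route as the paper: an induction down the recursion tree showing that the set of vertices contracted into the root of each sub-instance (your ``ancestral separator union,'' the paper's $R_H$) is reachable from $r$ in $G'$, using that each separator/shortest path is an out-tree emanating from the contracted root, and then pruning the possibly cyclic union to an out-arborescence without increasing cost. The overlap issue you flag for the halving branch is handled exactly by that final pruning step, as in the paper.
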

\begin{proof}
	It suffices to show that $G'$ is connected and that all $v \in G'$ are reachable from $r$ 
	(if $G$ is cyclic, one can always remove edges 
	in cycles while maintaining the same set of terminals and ensuring that the cost does 
	not increase). For any $(H, \phi) \in V_T$, let $R_H \subseteq V$ be the subset of 
	vertices contracted into the root such that $H \subseteq G / (R_H \cup \{r\})$. We prove by 
	induction on the distance (i.e. number of edges) from the root $r_T$ that for all 
	$(H, \phi) \in V(T')$, all $v \in R_H$ are reachable from $r$ in $G'$. 

	Fix $(H, \phi) \in V(T')$. If the distance from $(H, \phi)$ to the root is 0, then 
	$(H, \phi) = (G, \gamma)$; $R_G = \{r\}$ which is clearly connected to $r$ in $G'$.
	Suppose the parent of $(H, \phi)$ in $\calT$ is $(H, 2\phi)$.
	Then since $T'$ is a rooted tree, $(H, 2\phi) \in V(T')$, so by induction, 
	all of $R_H$ is reachable from $r$ in $G'$.

	Otherwise, the parent of $(H, \phi)$ in $\calT$ must be some $(H', \phi)^*$. 
	Let $P$ be the planar separator given by $\algsep(H', r, S_{H'}, \phi)$. 
	This means that $H \subseteq H'/P$. Note that 
	$e^* = ((H', \phi), (H', \phi)^*) \in T'$, since it is on the unique path in $\calT$ 
	from $r_T$ to $(H, \phi)$. Thus $P \in \calP(T')$, so 
	$P \subseteq G'$. Note that $P$ is a directed outtree from 
	the root of $H'$, so all vertices $v \in P$ must be reachable 
	from the root of $H'$. 
	This root is the vertex resulting from contracting all nodes in $R_{H'}$;
	thus all $v \in P$ are reachable from $R_{H'}$ in $G'$.
	By induction, $R_{H'}$ is reachable 
	from $r$ in $G'$. Since $R_{H} = P \cup R_{H'}$, all vertices in $R_{H}$ must be 
	reachable from $r$ in $G'$, concluding the inductive proof. 

	To show that all of $G'$ is reachable from $r$, consider any $P \in \calP(T')$. 
	By construction, $P$ is either a shortest path or a planar separator computed in 
	$\algemb(H, r, S_H, \phi)$ for some $(H, \phi) \in T'$.
	In either case, $P$ is a directed 
	outtree from the root of $H$, so all of $P$ is reachable from $R_H$.
	By the above inductive proof, $R_H$ is reachable 
	from $r$ in $G'$, so $P$ must be as well.
\end{proof}
\end{document}